\newtheorem{theorem}{Theorem}[section]
\newtheorem{lemma}[theorem]{Lemma}
\newtheorem{prop}{Proposition}[section]
\newtheorem{proof}{Proof}
\DeclareMathOperator*{\diag}{diag}
\DeclareMathOperator*{\SNR}{SNR}
\DeclareMathOperator*{\avg}{avg}
\DeclareMathOperator*{\var}{Var}
\begin{document}

\title{MIMO Communications over Multi-Mode Optical Fibers: Capacity Analysis and Input-Output Coupling Schemes}

\author{\IEEEauthorblockN{Peter Kairouz and Andrew Singer}\\
 \IEEEauthorblockA{Coordinated Science Laboratory / Department
of Electrical and Computer Engineering\\
 University of Illinois at Urbana Champaign, Urbana, Illinois,
61801\\
 Email: \{kairouz2, acsinger\}@illinois.edu} }

\maketitle


\begin{abstract}
We consider multi-input multi-output (MIMO) communications over multi-mode
fibers (MMFs). Current MMF standards, such as OM3 and OM4, use fibers with
core radii of 50\,$\mu$m, allowing hundreds of modes to propagate.
Unfortunately, due to physical and computational complexity limitations, we
cannot couple and detect hundreds of data streams into and out of the
fiber. In order to circumvent this issue, two solutions were presented in
the literature. The first is to design new fibers with smaller radii so
that they can support a desired number of modes. The second is to design
multi-core fibers with a reasonable number of cores. However, both
approaches are expensive as they necessitate the replacement of currently
installed fibers. Moreover, both approaches have limited future
scalability. In our work, we present input-output coupling schemes that
allow the user to couple and extract a reasonable number of signals from a
fiber with many modes (such as those used in OM3 and OM4 standards). This
approach is particularly attractive as it is scalable; i.e., the fibers do
not have to be replaced every time the number of transmitters or receivers
is increased, a phenomenon that is likely to happen in the near future. In
addition, fibers with large radii can support higher peak powers relative
to fibers with small radii while still operating in the linear regime.
However, the only concern is that fibers with more modes suffer from
increased mode-dependent losses (MDLs). Our work addresses this last
concern.

We present a statistical channel model that incorporates intermodal
dispersion, chromatic dispersion, mode dependent losses, and mode coupling.
We show that the statistics of the fiber's frequency response are
independent of frequency. This simplifies the computation of the average
Shannon capacity of the fiber. We later extend this model to include input
and output couplers and provide an input-output coupling strategy that
leads to an increase in the overall capacity. This strategy can be used
whenever channel state information (CSI) is available at the transmitter
and the designer has full control over the couplers. We show that the
capacity of an $N_t \times N_t$ MIMO system over a fiber with $M \gg N_t$
modes can approach the capacity of an $N_t$-mode fiber with no
mode-dependent losses. Moreover, we present a statistical input-output
coupling model in order to quantify the loss in capacity when CSI is not
available at the transmitter or there is no control over the input-output
coupler. It turns out that the loss, relative to $N_t$-mode fibers, is
minimal (less than 0.5 dB) for a wide range of signal-to-noise ratios
(SNRs) and a reasonable range of MDLs. This means that there is no real
need to replace the already installed fibers and that our strategy is an
attractive approach to solving the above problem.
\end{abstract}

\IEEEpeerreviewmaketitle

\section{Introduction}
\label{ch:introduction}
Since Shannon defined the notion of channel capacity as the fundamental limit on achievable transmission rates with vanishing probability of error, system designers have attempted to reach this limit by leveraging device technology advances and increasingly sophisticated algorithms and architectures.  Moore's law, together with advances in signal processing, information theory, and coding theory have enabled us to essentially achieve this fundamental limit for a number of narrow-band wired and wireless communication links. Because of their superior bandwidth-distance product, optical fibers have become extremely popular and have largely replaced traditional copper wire technologies. Optical communication links can support serial data rates that are typically several orders of magnitude higher than their wired or wireless electrical counterparts, such as voice-band or cable modem technology or even high-speed chip-to-chip serial links. Despite their superiority, optical links have limited capacity and the circuits, signal processing, and information theory communities need to completely re-think the design and analysis of optical communication systems in order to address the ever increasing demand for Internet bandwidth. Multi-input multi-output (MIMO) communications over multi-mode fibers (MMFs) holds the promise of improving bandwidth efficiency. However, the capacity of MIMO optical links has not been fully investigated due to the lack of accurate and mathematically tractable channel models. In this paper, we present a detailed linear model for the MIMO multi-mode optical channel and analyze its capacity as a function of input-output coupling as well as other physical parameters. We also introduce an input-output coupling strategy and compare it to the uncontrolled coupling case in terms of achievable rate.

\subsection{Motivation}
In an information-intensive era, the demand for Internet bandwidth is increasing at a rate of $56\%$ per year, while the increase in supply is falling behind at a rate of $25\%$ per year \cite{Tkach2012}. The increase in demand is fueled by the boom in web-based data services such as cloud computing and real-time multimedia applications. As a result, optical fiber communication researchers are looking into new ways of boosting the transmission rate of optical links. Given that polarization division multiplexing (PDM) and wavelength division multiplexing (WDM) have already been exploited \cite{Winzer2009}, the only remaining degree of freedom is space division multiplexing \cite{Winzer2011outageprob_smots}. MIMO optical communication increases the transmission rates of MMF systems by multiplexing a number of independent data streams on different spatial modes. Note that, unlike WDM systems, all the laser sources in this case have the same wavelength. MMF is a dominant type of fiber used for high speed data communication in short-range links such as local area networks (LAN) and data centers \cite{Benner2005mmf_servers}. It is usually favored over single-mode fibers because of its relaxed connector alignment tolerances and its reduced transceiver connector costs. Plastic optical fibers are great examples of MMFs with remarkably low installation and operation costs \cite{Koike2008}. However, they suffer from mode-dependent losses, mode coupling, intermodal dispersion, and chromatic dispersion (group velocity dispersion) \cite{Agarwal2002fiber_optics}. All these phenomena will be explained in detail in Section \ref{prop_model}. These limitations make the design and analysis of MIMO multi-mode systems challenging yet exciting.

\subsection{Literature Review}
Following the work of Shannon \cite{Shannon1949mathematical}, many information theorists investigated the capacity of different channels, including single-input single-output (SISO) channels with memory, channels with constrained input alphabet, and multiple-input multiple-output (MIMO) channels. In their seminal works, Telatar and Foschini \emph{et al.}, independently showed that the capacity of a MIMO flat fading wireless channel, under the Raleigh fading model, scales linearly with respect to the minimum number of antennas at the transmitter and receiver \cite{telatar1999_capacity,Foschini1998wireless_limits}. Since then, the wireless communications community has been focused on developing detection and coding schemes for MIMO systems in order to achieve the aforementioned capacity gains. Recent wireless technologies such as WLAN 802.11n and Long Term Evolution Advanced (LTE-A) are examples of MIMO systems deploying up to 8 transmitters and receivers. More importantly, this MIMO technique is not limited to wireless systems.

Loosely speaking, the number of degrees of freedom (DoF) of a channel is an upper limit on the number of independent data streams that can be transmitted through the channel over a period of time. A more rigorous definition of DoF is given in \cite{tse_viswanath_wireless_2005} as the minimial dimension of the received signal space. The quadrature and in-phase components of a passband information signal are two familiar and commonly exploited degrees of freedom in wired and wireless communication systems. Frequency, time, code, quadrature, and polarization states are all well explored and already utilized in commercial optical systems. However, the spatial degree of freedom, which is unique to MMFs, has not been exploited yet in commercial products and is still under research. In 2000, H. R. Stuart was the first to notice the similarity between the multipath wireless channel and the MMF optical channel and suggested using the spatial modes to multiplex several independent data streams onto the fiber \cite{Stuart2000mimo_optical}. Prior to this finding, single-mode fibers were always considered to be superior to MMFs because of their improved bandwidth-distance product (as SMFs do not suffer from intermodal dispersion). However, we will show in Section \ref{ch:capacity} that MMFs have advantages over single-mode fibers from an information theoretic capacity perspective. Therefore, MIMO over MMF seems to be a better route to higher data rates. In fact, Stuart was the first to demonstrate the feasibility of a $2\times2$ MMF system and to show that there are indeed some capacity gains to be leveraged \cite{Stuart2000mimo_optical}. However, Stuart's analysis and experiments assumed a radio frequency sub-carrier ($\sim$ 1 GHz) instead of an optical carrier ($\sim$ 100 THz). This assumption was later relaxed in the work of Shah \emph{et al.} but their treatment did not account for any intermodal dispersion, chromatic dispersion, or mode coupling \cite{Shah2005_comimo}. Recently, the information theoretic capacity of coherent MMF systems has been studied in \cite{Winzer2011outageprob_smots}, where the authors ignored the frequency selectivity of the channel but incorporated the effects of mode coupling. In \cite{Keang-Po2011mdl_capacity}, Keang-Po \emph{et al.} considered the capacity of a frequency selective MMF channel at a particular frequency. They later studied the impact of frequency diversity on the channel capacity for mutli-mode fibers with 10 modes \cite{Keang-Po2011frequency_diversity}. However, their models did not incorporate the effect of mode-dependent phase shifts or chromatic dispersion.

\subsection{Outline and Contributions}
In Section \ref{ch:fundamentals}, we present a MIMO channel propagation model that takes intermodal dispersion, chromatic dispersion, mode-dependent losses, and mode coupling into account. In Section \ref{ch:capacity}, we compute the Shannon capacity of an $M$-mode fiber and demonstrate how mode-dependent losses and mode coupling affect it. In Section \ref{ch:coupling}, we analyze the coupling of a reasonable number of laser sources to a fiber with hundreds of modes. We also propose an input-output coupling model and present a coupling strategy: using the input-output couplers to perform a particular type of beamforming. This strategy allows the effective transmission of data along the least lossy subset of end-to-end eigenmodes. The resultant capacity is almost equal to that of a fiber with $N_t$ modes and no modal losses, an ideal case which maximizes the capacity of an $N_t \times N_t$ MIMO system. This coupling strategy can only be used when channel state information (CSI) is available at the transmitter and there is full control over the input-output couplers. In the absence of these conditions, an appropriate random input-output coupling model is used in order to better model the behavior of the system and quantify the expected loss in the fiber's capacity. It turns out that the loss, relative to $N_t$-mode fibers, is minimal (less than 0.5 dB) for a wide range of SNRs and a reasonable range of MDLs.
\subsection{Random Unitary Matrices}
\label{random_unitary_matrices}
In this section, we provide a brief overview on random unitary matrices and discuss the isotropic invariance property that will prove useful when we compute the capacity of the fiber in Section \ref{ch:capacity}. We define $\mathbb{U}\left(M\right):= \{\mathbf{U} \in \mathbb{C}^{M \times
M} | \mathbf{U^{*}U=UU^{*}=I_M}\}$ to be the space of $M\times M$ unitary
matrices. If the distribution of an $M\times N$ random matrix is invariant to
left (right) multiplication by any $M\times M$ ($N\times N$) deterministic
unitary matrix, it is called left (right) rotationally invariant. Assume the
probability distribution function (pdf) $f\left(\mathbf{A}\right)$ of an
$M\times N$ random matrix $\mathbf{A}$ exists, $\mathbf{A}$ is left
rotationally invariant if
\begin{equation}
f_{\mathbf{UA}}\left(\mathbf{UA}\right)=f\left(\mathbf{A}\right)
\end{equation}
where $\mathbf{U} \in \mathbb{U}\left(M\right)$, and $\mathbf{A}$ is right
rotationally invariant if
\begin{equation}
f_{\mathbf{AV}}\left(\mathbf{AV}\right)=f\left(\mathbf{A}\right)
\end{equation}
where $\mathbf{V} \in \mathbb{U}\left(N\right)$. A random matrix is
isotropically invariant if it is left and right rotationally invariant. It
turns out that $\mathbb{U}\left(M\right)$ forms a compact topological group,
and thus a unique uniform measure (up to a scalar multiplication), called
Haar measure, can be defined over $\mathbb{U}\left(M\right)$
\cite{Mezzadri2007_randomMatrix,Petz2004_asymptotics}. Random unitary
matrices of size $M\times M$ are random matrices sampled uniformly from
$\mathbb{U}\left(M\right)$.
\begin{lemma}
\label{isotropic_unitary} The pdf of $\mathbf{A}$, an $M\times M$ random
unitary matrix, is isotropically invariant \cite{Petz2004_asymptotics}.
\end{lemma}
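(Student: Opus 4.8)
The plan is to reduce the claim to the defining invariance properties of the Haar measure on the compact group $\mathbb{U}(M)$. By construction the random unitary matrix $\mathbf{A}$ is distributed according to the normalized Haar probability measure $\mu$ on $\mathbb{U}(M)$, so proving isotropic invariance amounts to showing that the law of $\mathbf{A}$ is unchanged under the left translation $\mathbf{A}\mapsto\mathbf{UA}$ and the right translation $\mathbf{A}\mapsto\mathbf{AV}$, for every fixed $\mathbf{U},\mathbf{V}\in\mathbb{U}(M)$. Here the density $f$ is understood with respect to the Haar reference measure, so that the uniform law has constant density and the pointwise identity in the statement reflects a translation with unit Jacobian.

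First I would handle left rotational invariance by a direct pushforward computation. For any Borel set $S\subseteq\mathbb{U}(M)$, the law of $\mathbf{UA}$ assigns it mass $\Pr[\mathbf{UA}\in S]=\Pr[\mathbf{A}\in\mathbf{U}^{-1}S]=\mu(\mathbf{U}^{-1}S)$. Left-invariance of the Haar measure, $\mu(\mathbf{g}S)=\mu(S)$ for all $\mathbf{g}\in\mathbb{U}(M)$, then gives $\mu(\mathbf{U}^{-1}S)=\mu(S)=\Pr[\mathbf{A}\in S]$, so $\mathbf{UA}$ and $\mathbf{A}$ share the same law; in density language this is exactly $f_{\mathbf{UA}}=f$. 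The identical computation with a right translation reduces right rotational invariance to the statement $\mu(S\mathbf{V}^{-1})=\mu(S)$, i.e.\ to right-invariance of $\mu$.

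The step I expect to be the main obstacle is precisely the passage from the (defining) left-invariance of $\mu$ to its right-invariance, since a priori a group carries a unique \emph{left} Haar measure and nothing automatically forces it to be right-invariant. The key fact is that $\mathbb{U}(M)$ is \emph{unimodular}: its modular function $\Delta:\mathbb{U}(M)\to\mathbb{R}_{>0}$ is a continuous group homomorphism, and because $\mathbb{U}(M)$ is compact its image is a compact subgroup of $\mathbb{R}_{>0}$, hence $\{1\}$. Consequently the left Haar measure is also right-invariant, which closes the remaining gap. Once both invariances are in hand, $\mathbf{A}$ is left and right rotationally invariant by definition, i.e.\ isotropically invariant, completing the proof.

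An alternative to the modular-function argument, should one wish to stay concrete, is to verify invariance directly in a coordinate chart of $\mathbb{U}(M)$ and check that the Jacobian of $\mathbf{A}\mapsto\mathbf{UA}$ (respectively $\mathbf{A}\mapsto\mathbf{AV}$) has unit modulus, so that the constant Haar density is preserved pointwise; the group-theoretic route is, however, cleaner and avoids any chart-dependent bookkeeping.
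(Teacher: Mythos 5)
Your proof is correct. Note, however, that the paper itself offers no proof of this lemma at all: it simply delegates the statement to the citation \cite{Petz2004_asymptotics}, so there is no internal argument to compare against. Your write-up supplies the standard argument that the cited reference would give: left rotational invariance is an immediate pushforward computation from the defining left-invariance of the Haar measure $\mu$ on $\mathbb{U}(M)$, and you correctly isolate the only nontrivial step, namely passing to right-invariance. Your unimodularity argument is sound: the modular function is a continuous homomorphism from the compact group $\mathbb{U}(M)$ into $(\mathbb{R}_{>0},\times)$, its image is therefore a compact subgroup of $\mathbb{R}_{>0}$, which forces it to be $\{1\}$, so $\mu$ is also right-invariant and $\mathbf{AV}$ has the same law as $\mathbf{A}$. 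One small point worth making explicit, which you handle correctly but tersely: the paper's definition of invariance is phrased as a pointwise identity of pdfs, $f_{\mathbf{UA}}(\mathbf{UA})=f(\mathbf{A})$, and this only makes sense once a reference measure is fixed; taking the Haar measure itself as the reference (so that the uniform law has constant density and translations have unit Jacobian) is the right reading, and your measure-theoretic formulation in terms of laws of Borel sets is in fact cleaner than the paper's density notation.
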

This lemma will be helpful in the following section.

\section{Fundamentals and Modeling}
\label{ch:fundamentals}
Coherent systems use well calibrated phase controlled laser sources and local oscillators operating well in the terahertz regime (hundreds of THz) to transmit and recover the phase and amplitude of an information-bearing signal. On the other hand, non-coherent systems use simple LEDs and photo detectors to transmit and detect the energy of an information signal. Thus, coherent systems are more complex, more expensive, and harder to build and maintain when compared to non-coherent systems. This is why the majority of currently deployed optical systems are non-coherent while a small percentage of the high end systems are coherent. However, coherent systems are becoming more popular as the optoelectronic devices are becoming more affordable. In fact, the state-of-the-art optical systems use both polarization and quadrature multiplexing to multiply the data rate by a factor of four. For example, OC-768 systems use dual polarizations in addition to quadrature phase-shift keying (QPSK) to multiplex four independent data streams and transmit them all at the same time. The OC-768 network has transmission speeds of 40 Gbit/s. This means that in a system using QPSK and dual polarization, the transmitter operates at a frequency of about 10 GHz. Because coherent systems are becoming more popular and affordable, the capacity analysis we perform in Section \ref{ch:capacity} is exclusively applicable to coherent systems.

Electromagnetic waves propagating inside the core of a fiber are characterized by Maxwell's equations. When the core radius is sufficiently small, only one solution to the wave equations is supported and the fiber is said to be a single-mode fiber. In multi-mode fiber systems, the core radius is relatively large and hence there is more than one solution (propagation mode) to the wave equation \cite{Agarwal2002fiber_optics}. Ideally, the field inside the core would propagate in different orthogonal modes that do no interact with one another. However, due to manufacturing non-idealities and index of refraction inhomogeneities, the modes may couple. This phenomenon is called mode coupling and is modeled in Section \ref{prop_model}.
\subsection{Fiber Propagation Model}
\label{prop_model}
For coherent optical systems operating in the linear regime, the basic form of the baseband transfer function governing the input-output relationship of the $i^{th}$ mode is given by
\begin{equation}\label{fiberprop}H_{i}\left(x,y,z,\omega\right)=\tilde{\phi}_{i}\left(x,y,\omega\right)e^{-\frac{\kappa_i z}{2}}e^{-j\beta_i\left(\omega+\omega_c\right)z}
\end{equation}
where $\omega_c$ is the laser's center frequency, $\tilde{\phi}_{i}\left(x,y,\omega\right)$ is the transverse function (spatial pattern) of the $i^{th}$ mode, $\kappa_i$ is the mode-dependent attenuation factor, and $\beta_i\left(\omega+\omega_c\right)$ is the $i^{th}$ mode's propagation constant \cite{papen2012_lightwave}. Expanding the function $\beta_i\left(\omega\right)$ around $\omega_c$ using its Taylor series expansion, and keeping the first and second order derivative terms, we get
\begin{equation}
\label{aprrox_prop}
H_{i}\left(x,y,\omega\right)\approx \phi_{i}\left(x,y\right)e^{\frac{g_i}{2}}e^{-j\theta_i}e^{-j\omega\tau_i}e^{-j\omega^2\alpha_i}
\end{equation}
where $\phi_{i}\left(x,y\right)=\tilde{\phi}_{i}\left(x,y,L,\omega_c\right)$, $g_i=-\kappa_i L$, $\theta_i=\beta_i\left(\omega_c\right)L$, $\tau_i=\beta_i^{'}\left(\omega_c\right)L$, and $\alpha_i=\beta_i^{''}\left(\omega_c\right)L$. Observe that $z$ has been suppressed as it has been evaluated at $L$, the fiber's length. The function $\tilde{\phi}_{i}\left(x,y,\omega\right)$ generally depends on $\omega$ but since the signal spectrum (tens of GHz) is narrow around the laser's center frequency (hundreds of THz), we drop this dependency and evaluate it at $\omega_c$. The model in (\ref{aprrox_prop}) assumes that the propagation of the mode is completely characterized by a second order linear model where the only phenomena exhibited along the $i^{th}$ mode are
\begin{itemize}
\item \emph{mode-dependent loss} (MDL): $g_i=-\kappa_i L$
\item \emph{mode-dependent phase shift} (MDPS): $\theta_i=\beta_i\left(\omega_c\right)L$
\item \emph{group delay} (GD):  $\tau_i=\beta_i^{'}\left(\omega_c\right)L$
\item \emph{group velocity dispersion} (GVD): $\alpha_i=\beta_i^{''}\left(\omega_c\right)L$
\end{itemize}
The mode-dependent losses (MDLs) are negative quantities describing the attenuation experienced by the modal fields. On the other hand, the mode-dependent phase shifts (MDPSs) represent phase shifts experienced by the modal fields. In general, modal fields propagate at different speeds and thus the group delays (GDs) characterize the arrival times of different modes. Therefore, if we transmit a narrow pulse through the fiber, it would appear as a pulse having a width of $T_d=\max_{i,j}\{|\tau_i-\tau_j|\}$ at the output of the fiber. The quantity $T_d$ is referred to as the channel's delay spread. Assume, without loss of generality, that the group delays are sorted in increasing order, $\tau_1$ being the smallest and $\tau_M$ being the largest. In this case, $T_d$ is given by
\begin{eqnarray}
T_d&=&\max_{i,j}\{|\tau_i-\tau_j|\} \nonumber \\
&=&\tau_M-\tau_1 \nonumber \\
&=&L\left(\beta_M^{'}\left(\omega_c\right)-\beta_1^{'}\left(\omega_c\right)\right)
\end{eqnarray}
Thus, $T_d$ is directly proportional to the length of the fiber. The pulse broadening phenomenon, due to nonzero $T_d$, is called intermodal dispersion and is a serious performance limitation in MMF systems. The group velocity dispersion (GVD), also called chromatic dispersion (CD), suggests that different frequencies coupled to the same mode propagate at different speeds and hence broadening occurs to the field propagating in a particular mode. This phenomenon is called intra-modal dispersion. In a first-order model, intermodal dispersion is assumed to dominate over intra-modal dispersion and the GVD term is typically neglected, especially for shorter lengths $L$. Furthermore, since we are not interested in analyzing the field at every point $\left(x,y\right)$ in the fiber's core, we suppress this term to obtain the following expression:
\begin{equation}
\label{fiberpropaproxx}
H_{i}\left(\omega\right)\propto e^{\frac{g_i}{2}}e^{-j\theta_i}e^{-j\omega\tau_i}e^{-j\omega^2\alpha_i} \end{equation}
Ideally, the field at the output due to the $i^{th}$ mode is given by $r_i\left(t\right)=s_i\star h_i\left(t\right)$, where $s_i\left(t\right)$ is the field at the input due to the same mode. Thus, the frequency domain vector representation of the modal fields at the output of the fiber is given by
\begin{equation}
\left[
  \begin{array}{c}
    r_1\left(\omega\right) \\
    \vdots \\
    r_M\left(\omega\right) \\
  \end{array}
\right]
=\left[
                   \begin{array}{cccc}
                     H_1\left(\omega\right) &  &    \\
                      &    \ddots&  \\
                      &    &  H_M\left(\omega\right)\\
                   \end{array}
                 \right]\left[
  \begin{array}{c}
    s_1\left(\omega\right) \\
    \vdots \\
    s_M\left(\omega\right) \\
  \end{array}
\right]
\end{equation}
where the off-diagonal entries are zero because the modes are assumed to be orthogonal. This analysis neglects the existing fiber aberrations such as fiber bends, index of refraction inhomogeneities, and random vibrations, and is therefore incomplete. In fact, the modes interact with one another and exchange energy as they propagate along the fiber, complicating the analysis of the wave propagation. The treatment we present was first applied to polarization mode dispersion (PMD) in \cite{Khosravani2001pmd_coupling} and was then generalized to model mode coupling in \cite{Keang-Po2011mdl_capacity}. In the regime of high mode coupling, for example when plastic optical fibers are used, an MMF with $M$ modes\footnote{In this work, $M$ refers to all the available spatial degrees of freedom including the $x$ and $y$ polarization states.} is split into $K \gg 1$ statistically independent longitudinal sections as depicted in Figure \ref{separate_sections}.
\begin{figure}[t]
\centering
{
\includegraphics[scale=0.8]{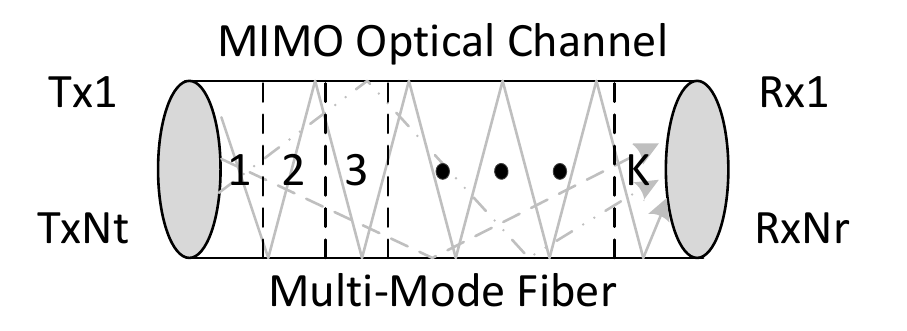}
\caption{A multi-mode fiber with $K$ propagation sections}
\label{separate_sections}
}
\end{figure}
The number of sections $K$ is equal to $L/l_c$, where $l_c$ represents the correlation length of the fiber. The frequency response of each section is given by
\begin{equation}
\mathbf{H}^{k}\left(\omega\right)=\mathbf{U}^{k}\mathbf{\Lambda}^{k}\left(\omega\right)\mathbf{V}^{k^{*}}~~~{\tt for}~~k=1,...,K
\end{equation}
where $\mathbf{U}^{k}$ and $\mathbf{V}^{k}$ are $M \times  M$ frequency-independent projection matrices (unitary matrices) describing the modal coupling via a phase and energy shuffling process at the input and output of each section and
\begin{equation}
\label{propagation_general}
\mathbf{\Lambda}^{k}\left(\omega\right)=\diag\left(e^{\frac{1}{2}g_1^{k}-j\theta_1^{k}-j\omega\tau_1^{k}-j\omega^2\alpha^{k}_1 },...,e^{\frac{1}{2}g_M^{k}-j\theta_M^{k}-j\omega\tau_M^{k}-j\omega^2\alpha^{k}_M }\right)
\end{equation}
is the propagation matrix describing the ideal (uncoupled) field propagation in the $k^{th}$ section. This model assumes that mode coupling occurs at the interface of different sections while the propagation in each section is ideal (and is described by $\mathbf{\Lambda}^{k}\left(\omega\right)$). In (\ref{propagation_general}), the vectors $\mathbf{g}^{k}=\left(g_1^{k}, ...,g_M^{k}\right)$, $\boldsymbol{\theta}^{k}=\left(\theta_1^{k}, ...,\theta_M^{k}\right)$, $\boldsymbol{\tau}^{k}=\left(\tau_1^{k}, ...,\tau_M^{k}\right)$, and $\boldsymbol{\alpha}^{k}=\left(\alpha_1^{k}, ...,\alpha_M^{k}\right)$ represent the uncoupled MDL, MDPS, GD, and GVD coefficients in the $k^{th}$ section. Here, $g_i^{k}=-\kappa_i^{k}l_c$, $\theta_i^{k}=\beta_i^{k}\left(\omega_c\right)l_c$, $\tau_i^{k}=\beta_i^{' k}\left(\omega_c\right)l_c$, and $\alpha^{k}_i=\beta_i^{''k}\left(\omega_c\right)l_c$ are not necessarily identical across the $M$ modes and $K$ sections and will be modeled as random variables in Section \ref{rand_prop_model}. The overall channel frequency response is equal to the product of the frequency responses of the $K$ sections and is given by
\begin{equation}
\label{fiber_response}
\mathbf{H}\left(\omega\right)=\mathbf{H}^{\left(K\right)}\left(\omega\right)...\mathbf{H}^{\left(1\right)}\left(\omega\right)
\end{equation}
Alternatively, one could describe the input output relationship in time domain by
\begin{equation}
\label{matrix_impulse_response}
\mathbf{H}\left(t\right)=\mathbf{H}^{\left(K\right)} \star \mathbf{H}^{\left(K-1\right)}...\mathbf{H}^{\left(2\right)} \star \mathbf{H}^{\left(1\right)}\left(t\right)
\end{equation}
In (\ref{matrix_impulse_response}), the operation $\mathbf{C}\left(t\right)=\mathbf{A}\star\mathbf{B}\left(t\right)$ represents a matrix convolution operation. Specifically, the $\left(i,j\right)^{th}$ entry of $\mathbf{C}\left(t\right)$ is given by
\begin{equation}
c_{ij}\left(t\right)=\sum_{l=1}^{M}a_{il}\star b_{lj}\left(t\right)
\end{equation}
where $M$ is the dimension of the square matrices $\mathbf{A}\left(t\right)$ and $\mathbf{B}\left(t\right)$ and $\star$ denotes the convolution operator.
\subsection{Random Propagation Model}
\label{rand_prop_model}
We now develop a random propagation model for the MIMO optical channel. The random model we introduce is an extended variant of what was presented in \cite{Keang-Po2011mdl_capacity} and \cite{Keang-Po2011gd_statistics}. The per-section coupling matrices $\mathbf{U}^{k}$ and $\mathbf{V}^{k}$ are modeled as independent and identically distributed (i.i.d.) random unitary matrices with arbitrary distributions. We assume that the propagation characteristics $g_i^{k}$, $\theta_i^{k}$, $\alpha_i^{k}$, and $\tau_i^{k}$ are all independent random quantities. In addition, each of $\mathbf{g}^{k}$, $\boldsymbol{\theta}^{k}$, $\boldsymbol{\tau}^{k}$, and $\boldsymbol{\alpha}^{k}$ has zero mean identically distributed, but possibility correlated, entries. The zero mean assumption is not restrictive because the mean MDL, MDPS, GD, and GVD do not affect the capacity of the fiber. Even though the propagation characteristics are identically distributed within a particular section, they need not have the same distributions from one section to the other. We define $\sigma_{k}$ to be the standard deviation of the uncoupled MDLs in the $k^{th}$ section: $\sigma_{k}=\sqrt{\var\left(g_i^{k}\right)}=l_c\sqrt{\var\left(\kappa_i^{k}\right)}$. At any fixed frequency $\omega_0$, the overall frequency response in (\ref{fiber_response}) can be written as
\begin{equation}
\label{fiber_responsefixed}
\mathbf{H}\left(\omega_0\right)=\mathbf{U}_{H}\left(\omega_0\right)\mathbf{\Lambda}_{H}\left(\omega_0\right)\mathbf{V}_{H}^{*}\left(\omega_0\right)
\end{equation}
by the singular value decomposition (SVD) of $\mathbf{H}\left(\omega_0\right)$. In (\ref{fiber_responsefixed}), all the matrices are random frequency dependent square matrices and
\begin{equation}
\mathbf{\Lambda}_{H}\left(\omega_0\right)=\diag\left(e^{\frac{1}{2}\rho_1},...,e^{\frac{1}{2}\rho_M}\right)
\end{equation}
contains the \emph{end-to-end eigenmodes}, singular values of $\mathbf{H}\left(\omega_0\right)$. We note that the end-to-end eigenmodes are not actual solutions to the wave equation, but rather they characterize the effective overall propagation through the fiber. The vector $\boldsymbol{\rho}=\left(\rho_1,\rho_2,...,\rho_M\right)$ contains the \emph{end-to-end mode-dependent losses}, the logarithms of the eigenvalues of $\mathbf{H\left(\omega_0\right)H^{*}\left(\omega_0\right)}$. These quantities are obviously frequency dependent random variables as they are the logarithms of the eigenvalues of a frequency dependent random matrix. The \emph{accumulated mode-dependent loss variance} is defined as
\begin{equation}
\label{accumulated_mdl}
\xi^2=\sigma_{1}^2+\sigma_{2}^2+...+\sigma_{K}^2
\end{equation}
where $\xi$ is measured in units of the logarithm of power gain and can be converted to decibels by multiplying its value by $10/\ln10$ \cite{Keang-Po2011mdl_capacity}. When all sections have identical distributions for the MDLs, Equation (\ref{accumulated_mdl}) reduces to $\xi^2=K\sigma^2$ because $\sigma_{k}=\sigma$ for all $k$.

\section{Capacity of Multi-Mode Fibers}
\label{ch:capacity}
In this section, we compute the capacity of coherent MMF systems under the presence of mode-dependent phase shifts (MDPSs), mode-dependent losses (MDLs), group delay (GD), chromatic dispersion (CD), and mode coupling. Table \ref{tab:mysecondtable} summarizes the parameters governing the random propagation model presented in Section \ref{rand_prop_model}.
\begin{table}[!htb]
\centering
\caption{Random propagation model}
\label{tab:mysecondtable}
\begin{tabular}{|c|c|}
  \hline
fiber's frequency response & $\mathbf{H}\left(\omega\right)=\mathbf{H}^{K}\left(\omega\right)...\mathbf{H}^{1}\left(\omega\right)$ \\
   \hline
per-section response &   $\mathbf{H}^{k}\left(\omega\right)=\mathbf{U}^{k}\mathbf{\Lambda}^{k}\left(\omega\right)\mathbf{V}^{k^{*}}$ \\
\hline
per-section coupling matrices & $\mathbf{U}^k$ and $\mathbf{V}^k$ \\
\hline
uncoupled MDL & $\mathbf{g}^{k}=\left(g_1^{k}, ...,g_M^{k}\right)$ \\
\hline
uncoupled MDPS & $\boldsymbol{\theta}^{k}=\left(\theta_1^{k}, ...,\theta_M^{k}\right)$ \\
\hline
uncoupled GD & $\boldsymbol{\tau}^{k}=\left(\tau_1^{k}, ...,\tau_M^{k}\right)$ \\
\hline
uncoupled GVD & $\boldsymbol{\alpha}^{k}=\left(\alpha_1^{k}, ...,\alpha_M^{k}\right)$ \\
\hline
uncoupled MDL variance & $\sigma_{k}^2=\var\left(g_i^{k}\right)=l_c^2\var\left(\kappa_i^{k}\right)$ \\
\hline
accumulated MDL variance & $\xi^2=\sigma_{1}^2+\sigma_{2}^2+...+\sigma_{K}^2$ \\
\hline
\end{tabular}
\end{table}
Each of the vectors $\mathbf{g}^{k}$, $\boldsymbol{\theta}^{k}$, $\boldsymbol{\tau}^{k}$, and $\boldsymbol{\alpha}^{k}$ has zero mean identically distributed, but possibility correlated, entries. Moreover, the vectors $\mathbf{g}^{k_1}$, $\boldsymbol{\theta}^{k_1}$, $\boldsymbol{\tau}^{k_1}$, and $\boldsymbol{\alpha}^{k_1}$ are independent of $\mathbf{g}^{k_2}$, $\boldsymbol{\theta}^{k_2}$, $\boldsymbol{\tau}^{k_2}$, and $\boldsymbol{\alpha}^{k_2}$ for $k_1\ne k_2$. However, they can have the same statistical distributions. Recall, from Section \ref{rand_prop_model}, that the $k^{th}$ section propagation matrix is given by
\begin{eqnarray}
\mathbf{\Lambda}^{k}\left(\omega\right)&=&\diag\left(e^{\frac{1}{2}g_1^{k}-j\theta_1^{k}-j\omega\tau_1^{k}-j\omega^2\alpha^{k}_1 },...,e^{\frac{1}{2}g_M^{k}-j\theta_M^{k}-j\omega\tau_M^{k}-j\omega^2\alpha^{k}_M }\right) \nonumber \\
&=& \Theta^{k}\mathbf{T}^{k}\mathbf{A}^{k}\mathbf{G}^{k}
\end{eqnarray}
where $\Theta^{k}=\diag\left(e^{-j\theta_1^{k}},...,e^{-j\theta_M^{k}}\right)$, $\mathbf{T}^{k}=\diag\left(e^{-j\omega\tau_1^{k}},...,e^{-j\omega\tau_M^{k}}\right)$, $\mathbf{A}^{k}=\diag\left(e^{-j\omega^2\alpha^{k}_1},...,e^{-j\omega^2\alpha^{k}_M }\right)$, and $\mathbf{G}^{k}=\diag\left(e^{\frac{1}{2}g_1^{k}},...,e^{\frac{1}{2}g_M^{k}}\right)$.
\subsection{Frequency Flat Channel Capacity}
We first study the capacity of the system when the channel's delay spread and CD are negligible. The more general frequency selective case is handled in Section \ref{frequency_selective_capacity}. In this regime, $\max_{ij}|\tau^k_i-\tau^k_j|\approx0$ and $\max_{i}|\alpha^k_i|\approx0$ and hence $\tau^k_i=\tau^k$ and $\alpha^k_i=0$ for all $i$ and $k$. Therefore, the $k^{th}$ section propagation matrix is given by
\begin{eqnarray}
\mathbf{\Lambda}^{k}\left(\omega\right)&=&\diag\left(e^{\frac{1}{2}g_1^{k}-j\theta_1^{k}-j\omega\tau_1^{k}-j\omega^2\alpha^{k}_1 },...,e^{\frac{1}{2}g_M^{k}-j\theta_M^{k}-j\omega\tau_M^{k}-j\omega^2\alpha^{k}_M }\right) \nonumber \\
&=& e^{-j\omega\tau^{k}}\diag\left(e^{\frac{1}{2}g_1^{k}-j\theta_1^{k}},...,e^{\frac{1}{2}g_M^{k}-j\theta_M^{k}}\right) \nonumber \\
&=& e^{-j\omega\tau^{k}}\Theta^{k}\mathbf{G}^{k} \nonumber \\
&=& e^{-j\omega\tau^{k}}\mathbf{\Lambda}^{k}
\end{eqnarray}
where $\mathbf{\Lambda}^{k}=\Theta^{k}\mathbf{G}^{k}$. Therefore, the overall response can be written as
\begin{eqnarray}
\label{frequency_flat_model}
\mathbf{H}\left(\omega\right)&=&e^{-j\omega\sum_{k=1}^{K}\tau^{k}}\mathbf{U}^{K}\mathbf{\Lambda}^{K}\mathbf{V}^{K^{*}}...\mathbf{U}^{1}\mathbf{\Lambda}^{1}\mathbf{V}^{1^{*}} \nonumber \\
&=&e^{-j\omega\sum_{k=1}^{K}\tau^{k}}\mathbf{U}_{H}\mathbf{\Lambda}_{H}\mathbf{V}_{H}^{*}
\end{eqnarray}
where $\mathbf{U}_{H}$, $\mathbf{\Lambda}_{H}$, and $\mathbf{V}_{H}^{*}$ are obtained by applying the singular value decomposition to $\mathbf{U}^{K}\mathbf{\Lambda}^{K}\mathbf{V}^{K^{*}}...\mathbf{U}^{1}\mathbf{\Lambda}^{1}\mathbf{V}^{1^{*}}$. Observe that $\mathbf{U}_{H}$, $\mathbf{\Lambda}_{H}$, and $\mathbf{V}_{H}^{*}$ are all frequency independent. The term $e^{-j\omega\sum_{k=1}^{K}\tau^{k}}$ is a delay term and can be neglected if we assume that the transmitter and receiver are synchronized. Thus, the channel is frequency flat and is given by
\begin{equation}
\label{channel_svvd}
\mathbf{H}=\mathbf{U}_{H}\mathbf{\Lambda}_{H}\mathbf{V}_{H}^{*}
\end{equation}
Consequently, the input-output relationship under this frequency flat channel model in (\ref{channel_svvd}) is given by
\begin{equation}
\label{frequency_flat_model_2}
\mathbf{y}=\mathbf{H}\mathbf{x}+\mathbf{v}
\end{equation}
where $\mathbf{x}$ and $\mathbf{y}$ represent the transmitted and received vectors, respectively, and $\mathbf{v}$ represents the modal noise which is modeled as additive white Gaussian noise (AWGN) with covariance matrix $N_0\mathbf{I_M}$, $N_0$ being the noise power density per Hz. This assumes that coherent optical communication is used and that electronic noise is the dominant source of noise. In addition, the fiber non-linearities are neglected under the assumption that the signal's peak to average power ratio (PAPR) and peak power are both low enough. This condition is not restrictive because MMFs have large radii and hence can support more power (relative to single mode fibers) while still operating in the linear region. The input-output model in (\ref{frequency_flat_model_2}) may seem identical to the wireless MIMO flat fading one. However, the Rayleigh fading i.i.d. model does not hold in our case because $\mathbf{H}$ is a product of $K$ terms, each containing a random diagonal matrix sandwiched between two random unitary matrices. Moreover, the entries of $\mathbf{H}$ are correlated. From \cite{telatar1999_capacity}, the capacity of a single instantiation of the channel in (\ref{channel_svvd}), when channel state information (CSI) is not available at the transmitter, is given by
\begin{eqnarray}
\label{channel_normal_capacity}
C\left(\mathbf{H}\right)&=&\log \det\left(\mathbf{I_{M}}+\frac{\SNR}{M}\mathbf{HH^{*}}\right) \nonumber \\
&=&\sum_{n=1}^{M}\log\left(1+ \frac{\SNR}{M}\lambda_n^2\right)~~~{\tt b/s/Hz}
\end{eqnarray}
where $\SNR=P/N_0W$, $P$ representing the total power divided equally across all modes and $W$ representing the available bandwidth in Hz. The $\lambda_n^2$'s are the eigenvalues of $\mathbf{HH^{*}}$. If CSI is available at the transmitter, the capacity could be further increased through waterfilling \cite{Cover2006_infotheory,tse_viswanath_wireless_2005}. In this case, the transmitter pre-processes the transmit vector $\mathbf{x}$ by allocating powers using a waterfilling procedure and then multiplies $\mathbf{x}$ by $\mathbf{V}_{H}$. On the other side, the receiver multiplies the received vector $\mathbf{y}$ by $\mathbf{U}^{*}_{H}$. This effectively turns the MIMO channel into a set of parallel AWGN channels. In optical communications, the beamforming process assumes that the designer can couple the fields of different sources onto the fiber exactly as determined by $\mathbf{V}_{H}$. This procedure, though beneficial, is complicated as it necessitates the design of sophisticated reconfigurable mode-selective spatial filters using coherent spatial light modulators \cite{Alon2004_spatialmodulators,Chen2011_modeselective}.

In the above analysis, we considered the capacity of (\ref{frequency_flat_model_2}) for a given instantiation of $\mathbf{H}$. However, since $\mathbf{H}$ is random,  the channel capacity $C\left(\mathbf{H}\right)$ is a random variable. In the fast fading regime, the ergodic capacity, expected value of $C\left(\mathbf{H}\right)$, is desired as it dictates the fastest rate of transmission \cite{tse_viswanath_wireless_2005}. On the other hand, in the slow fading regime, the cumulative distribution function (CDF) of $C\left(\mathbf{H}\right)$ is desired as it determines the probability of an outage event for a particular rate of transmission \cite{tse_viswanath_wireless_2005}. In either case, the cumulative distribution and the expected value of $C\left(\mathbf{H}\right)$ are both functions of the distribution of $\boldsymbol{\lambda}=\left(\lambda_1^2,...,\lambda_M^2\right)$, the eigenvalues of $\mathbf{HH^{*}}$. From (\ref{channel_svvd}), the matrix $\mathbf{HH}^{*}=\mathbf{U}_{H}\mathbf{\Lambda}^2_{H}\mathbf{U}^{*}_{H}$ is Hermitian and its eigenvalues, the squares of the singular values of $\mathbf{H}$, are real non-negative quantities. Recall from Section \ref{rand_prop_model} that the quantity $\lambda_n=e^{\frac{1}{2}\rho_n}$ refers to the $n^{th}$ end-to-end eigenmode and the quantity $\rho_n$ refers to the $n^{th}$ end-to-end mode dependent loss (MDL). The distribution of the end-to-end MDL values was studied in \cite{Keang-Po2011mdl_capacity} where it was shown that as $M$ tends to infinity, the $\rho_n$'s become independent and identically distributed on a semicircle. Their analysis and simulations, however, did not incorporate the effect of mode dependent phase shifts (MDPSs), $\theta_i^k$'s. We now show that the statistical distribution of the end-to-end MDL values is unchanged even when MDPSs are incorporated.
\begin{theorem}
\label{unitary_property}
The statistics of $\mathbf{H}$ are independent of mode dependent phase shifts.
\end{theorem}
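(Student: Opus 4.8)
The plan is to show that the probability law of $\mathbf{H}$ is left unchanged when every mode-dependent phase-shift matrix $\Theta^{k}$ is replaced by the identity, i.e.\ that $\mathbf{H}$ has the same distribution as the phase-free channel $\mathbf{H}_0$ obtained by deleting the MDPS. Since the capacity expression in (\ref{channel_normal_capacity}), and more generally every statistic of the channel, is a functional of the distribution of $\mathbf{H}$ (through the eigenvalues of $\mathbf{H}\mathbf{H}^{*}$), this distributional identity is exactly the assertion of the theorem.

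First I would isolate the phase contribution. Because $\Theta^{k}$, $\mathbf{T}^{k}$, $\mathbf{A}^{k}$, and $\mathbf{G}^{k}$ are all diagonal they commute, so I factor $\mathbf{\Lambda}^{k}=\Theta^{k}\mathbf{D}^{k}$ with $\mathbf{D}^{k}=\mathbf{T}^{k}\mathbf{A}^{k}\mathbf{G}^{k}$ and write
\[
\mathbf{H}=\mathbf{U}^{K}\Theta^{K}\mathbf{D}^{K}\left(\mathbf{V}^{K}\right)^{*}\cdots\mathbf{U}^{1}\Theta^{1}\mathbf{D}^{1}\left(\mathbf{V}^{1}\right)^{*}.
\]
The matrix $\Theta^{k}$ has unit-modulus diagonal entries, hence is unitary, and is independent of the coupling matrices. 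The key move is to absorb it on the right of the neighbouring coupling matrix by setting $\tilde{\mathbf{U}}^{k}:=\mathbf{U}^{k}\Theta^{k}$, which is again unitary and yields
\[
\mathbf{H}=\tilde{\mathbf{U}}^{K}\mathbf{D}^{K}\left(\mathbf{V}^{K}\right)^{*}\cdots\tilde{\mathbf{U}}^{1}\mathbf{D}^{1}\left(\mathbf{V}^{1}\right)^{*}.
\]

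Next I would argue that $\left(\tilde{\mathbf{U}}^{1},\dots,\tilde{\mathbf{U}}^{K}\right)$ has exactly the same joint law as $\left(\mathbf{U}^{1},\dots,\mathbf{U}^{K}\right)$ and the same independence from all remaining random factors. I would condition on the entire family $\{\Theta^{k},\mathbf{V}^{k},\mathbf{D}^{k}\}_{k=1}^{K}$. Given this family the $\mathbf{U}^{k}$ are still i.i.d.\ Haar-distributed (uniform on $\mathbb{U}\left(M\right)$), since they are independent of the conditioning variables; and by Lemma \ref{isotropic_unitary} each $\mathbf{U}^{k}$ is right rotationally invariant, so each $\tilde{\mathbf{U}}^{k}=\mathbf{U}^{k}\Theta^{k}$ is again Haar-distributed. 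The decisive point is that this conditional law does \emph{not} depend on the value of $\Theta^{k}$; hence $\left(\tilde{\mathbf{U}}^{1},\dots,\tilde{\mathbf{U}}^{K}\right)$ is i.i.d.\ Haar, independent of $\{\Theta^{k},\mathbf{V}^{k},\mathbf{D}^{k}\}$, and has the same joint distribution as $\left(\mathbf{U}^{1},\dots,\mathbf{U}^{K}\right)$. Because $\mathbf{H}$ is the same deterministic function of $\left(\tilde{\mathbf{U}}^{k};\mathbf{V}^{k},\mathbf{D}^{k}\right)$ that, applied to $\left(\mathbf{U}^{k};\mathbf{V}^{k},\mathbf{D}^{k}\right)$, produces the phase-free channel $\mathbf{H}_0$, equality of these joint laws gives $\mathbf{H}\stackrel{d}{=}\mathbf{H}_0$, as desired.

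The step I expect to be the main obstacle is the upgrade from equal marginals to equal joint law with preserved independence: it is not enough to note that each $\mathbf{U}^{k}\Theta^{k}$ is individually Haar, because the $\Theta^{k}$ are random and could a priori introduce correlations between the transformed coupling matrices and the other channel factors. The conditioning argument — showing that the conditional distribution of $\left(\tilde{\mathbf{U}}^{k}\right)$ given everything else is free of $\Theta^{k}$ — is precisely what resolves this, and one must track the full independence structure across all $K$ sections rather than reasoning one section at a time. A secondary point to verify carefully is the commutativity of the diagonal factors that lets $\Theta^{k}$ be peeled off, together with the observation that absorbing it into $\mathbf{U}^{k}$ from the right (rather than into $\mathbf{V}^{k}$) is the convenient choice given the ordering in the product.
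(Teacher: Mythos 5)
Your proposal is correct and follows essentially the same route as the paper: factor out $\Theta^{k}$, absorb it into the coupling matrix as $\mathbf{U}^{k}\Theta^{k}$, and use the right rotational invariance of the Haar-distributed $\mathbf{U}^{k}$ (Lemma \ref{isotropic_unitary}) together with a conditioning argument to conclude that the law of $\mathbf{H}$ is unchanged. Your only refinement is making explicit the upgrade from per-section marginal equality to equality of the joint law across all $K$ sections, a step the paper's proof argues one section at a time and justifies only implicitly through the assumed independence of the sections.
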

\begin{proof}
We show that the statistics of $\mathbf{H}^{k}=\mathbf{U}^{k}\Theta^{k}\mathbf{G}^{k}\mathbf{V}^{k^{*}}$ are the same as those of $\mathbf{H}^{k}=\mathbf{U}^{k}\mathbf{G}^{k}\mathbf{V}^{k^{*}}$ for all $k=1,...,K$. Observe that $\Theta^{k}$ is a unitary matrix that is also random because it has random orthonormal columns. However, $\Theta^{k}$ does not necessarily belong to the class of random unitary matrices as it is not necessarily uniformly distributed over $\mathbb{U}\left(M\right)$. Nonetheless, we note that the distribution of $\mathbf{W}=\mathbf{U}^{k}\Theta^{k}$ is the same as the distribution of $\mathbf{U}^{k}$ because
\begin{eqnarray}
f\left(\mathbf{W}\right)&=&\int_{\Theta^{k}}f\left(\mathbf{W}|\Theta^{k}\right)f\left(\Theta^{k}\right)d\Theta^{k} \nonumber \\
&=&f\left(\mathbf{U}^{k}\right)\int_{\Theta^{k}}f\left(\Theta^{k}\right)d\Theta^{k} \nonumber \\
&=&f\left(\mathbf{U}^{k}\right)
\end{eqnarray}
where the second equality holds because for a given instantiation of $\Theta^{k}$, the random matrix $\mathbf{W}|\Theta^{k}$ has the same distribution as $\mathbf{U}^{k}$ (by Lemma \ref{isotropic_unitary} in Section \ref{random_unitary_matrices}). Therefore, the statistics of $\mathbf{H}$ are unchanged when the MDPSs are incorporated, and thus the results in \cite{Keang-Po2011mdl_capacity} carry over to this more general setting.
\end{proof}
Figure \ref{mdl_distribution} shows that for $M=100$ the distribution of $\rho_n$ approaches a semicircle.
\begin{figure}
\centering
{
\subfloat[M=8]{\includegraphics[width=0.33\textwidth]{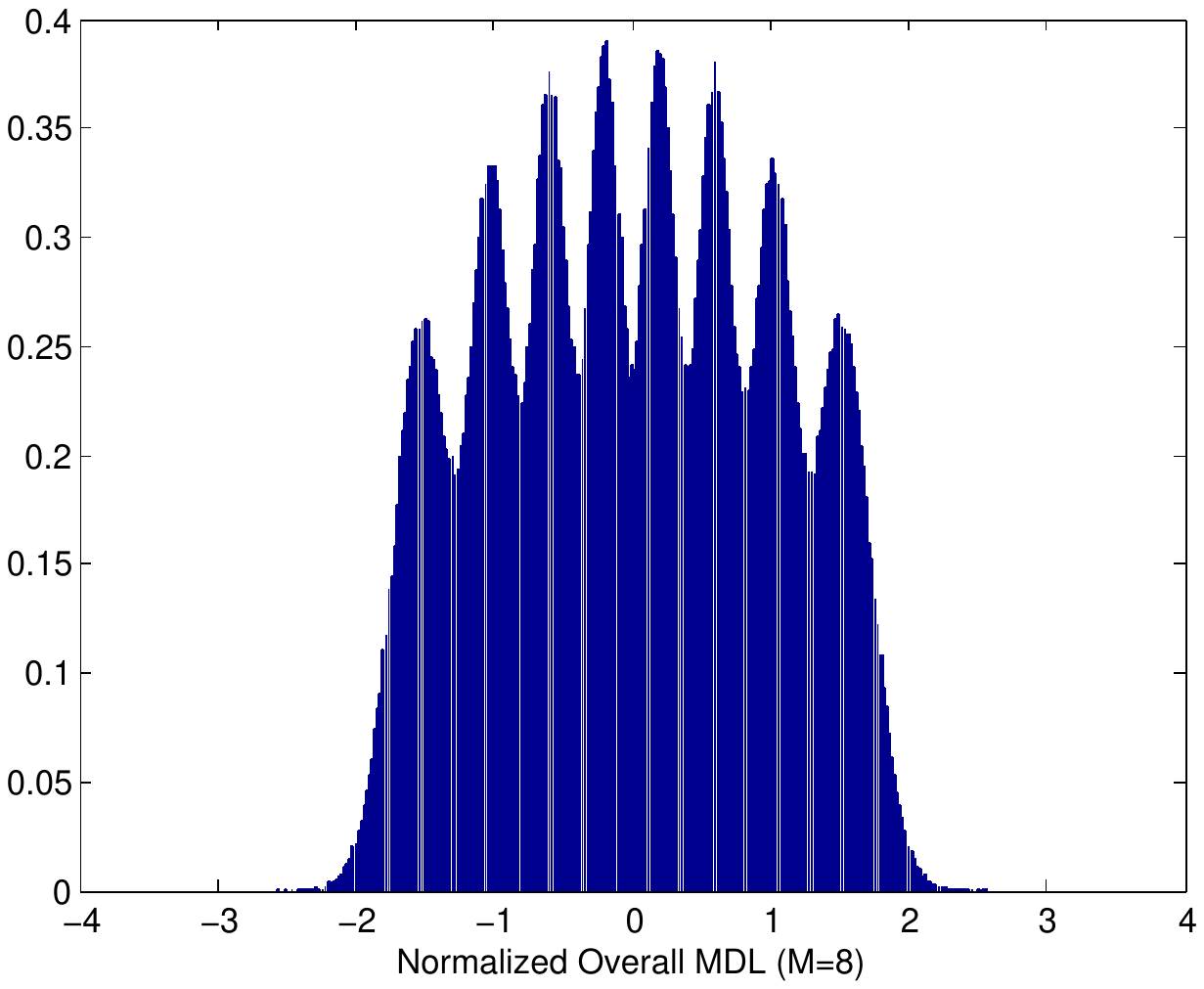}}
\subfloat[M=52]{\includegraphics[width=0.33\textwidth]{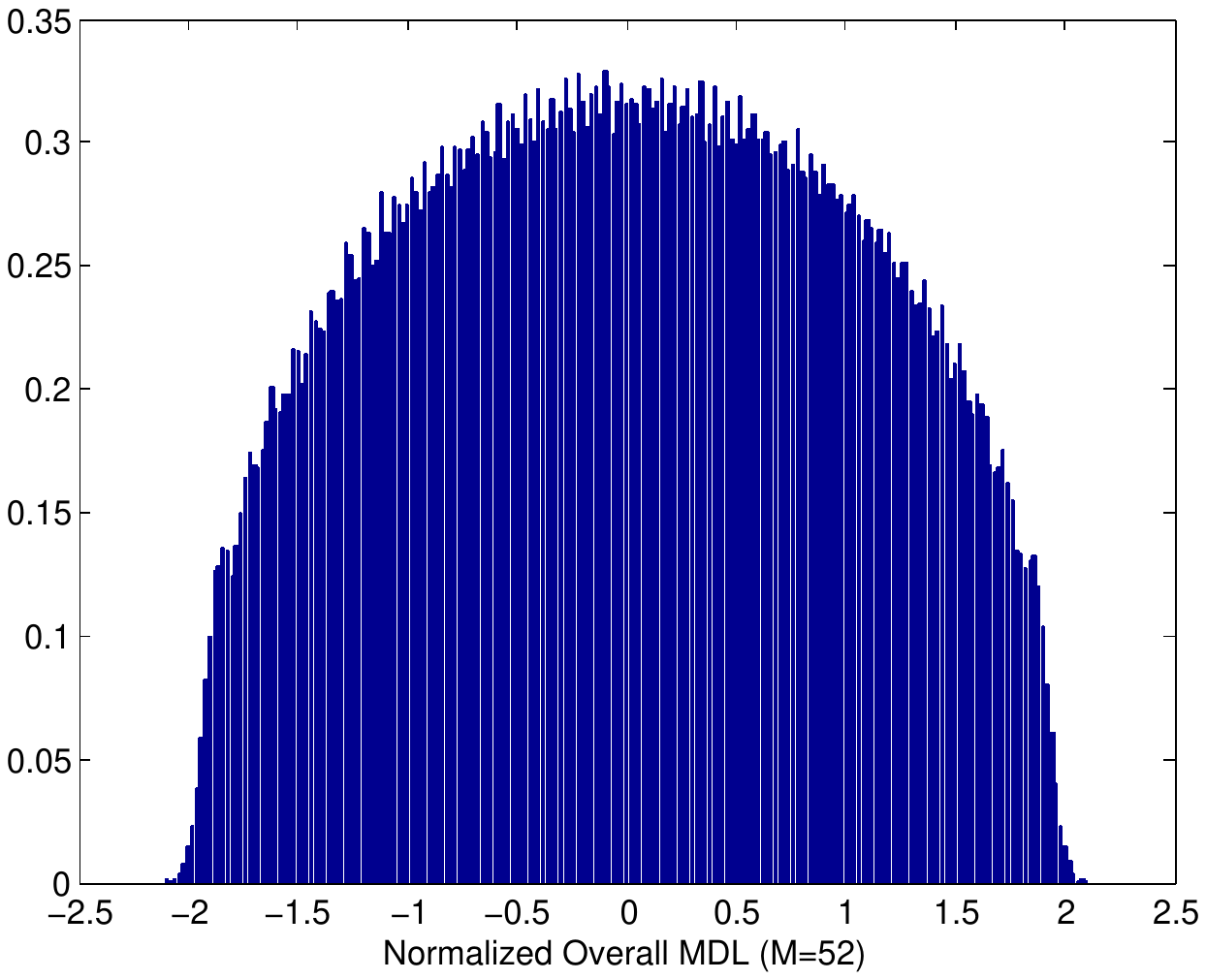}}
\subfloat[M=100]{\includegraphics[width=0.33\textwidth]{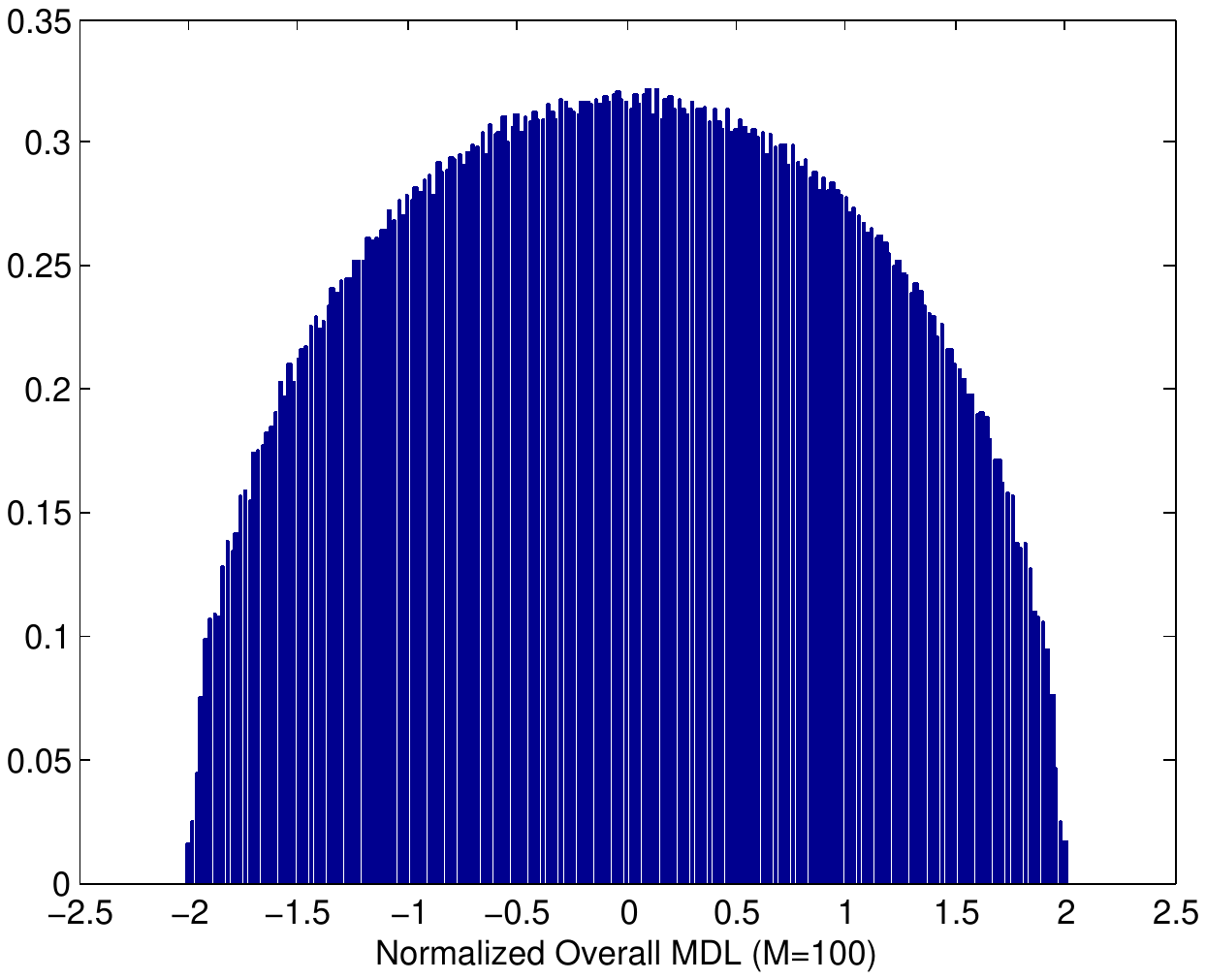}}
\caption{Distribution of end-to-end MDL}
\label{mdl_distribution}
}
\end{figure}
The distributions in Figure \ref{mdl_distribution} were obtained by generating a large sample of channel matrices $\mathbf{H}$ (for $M = 8$, $M = 52$, and $100$) and estimating the distributions of the logarithm of their singular values. Appendix \ref{appendix_random_matrices} explains how we can generate random unitary matrices, which are needed to create samples of $\mathbf{H}$, from matrices with i.i.d. complex Gaussian entries. In this section, we use the following notation
\begin{equation}
\tilde{x}=\frac{x}{\mathbb{E}\left[x\right]}
\end{equation}
where $\tilde{x}$ denotes the energy-normalized version of the random variable $x$. The average capacity of $C\left(\mathbf{H}\right)$ is given by
\begin{equation}
\label{average_capacity}
C_{\avg}=\sum_{n=1}^{M}\mathbb{E}\left[\log\left(1+ \frac{\SNR}{M}\tilde{\lambda}_n^2\right)\right]~~~{\tt b/s/Hz}
\end{equation}
where the average is taken over the statistics of the end-to-end MDL values \cite{tse_viswanath_wireless_2005}. Figure \ref{capacity_mdleffect1} shows the average capacity of MMFs for various values of $M$ and $\xi=4$ dB. The capacity of the system increases with an increasing number of modes. This is intuitive because as the number of modes increases, the fiber's spatial degrees of freedom are increased. Figure \ref{capacity_mdleffect2} shows the effect of accumulated MDLs on the average capacity. An increasing value of $\xi$ results in a capacity equivalent to that of a fiber with fewer modes. This means that as $\xi^2$, the accumulated mode-dependent loss variance, increases the system loses its spatial degrees of freedom.
\begin{figure}[H]
\centering
{
\subfloat[Capacity of MIMO MMF systems at $\xi=4$ dB]{\includegraphics[width=0.48\textwidth]{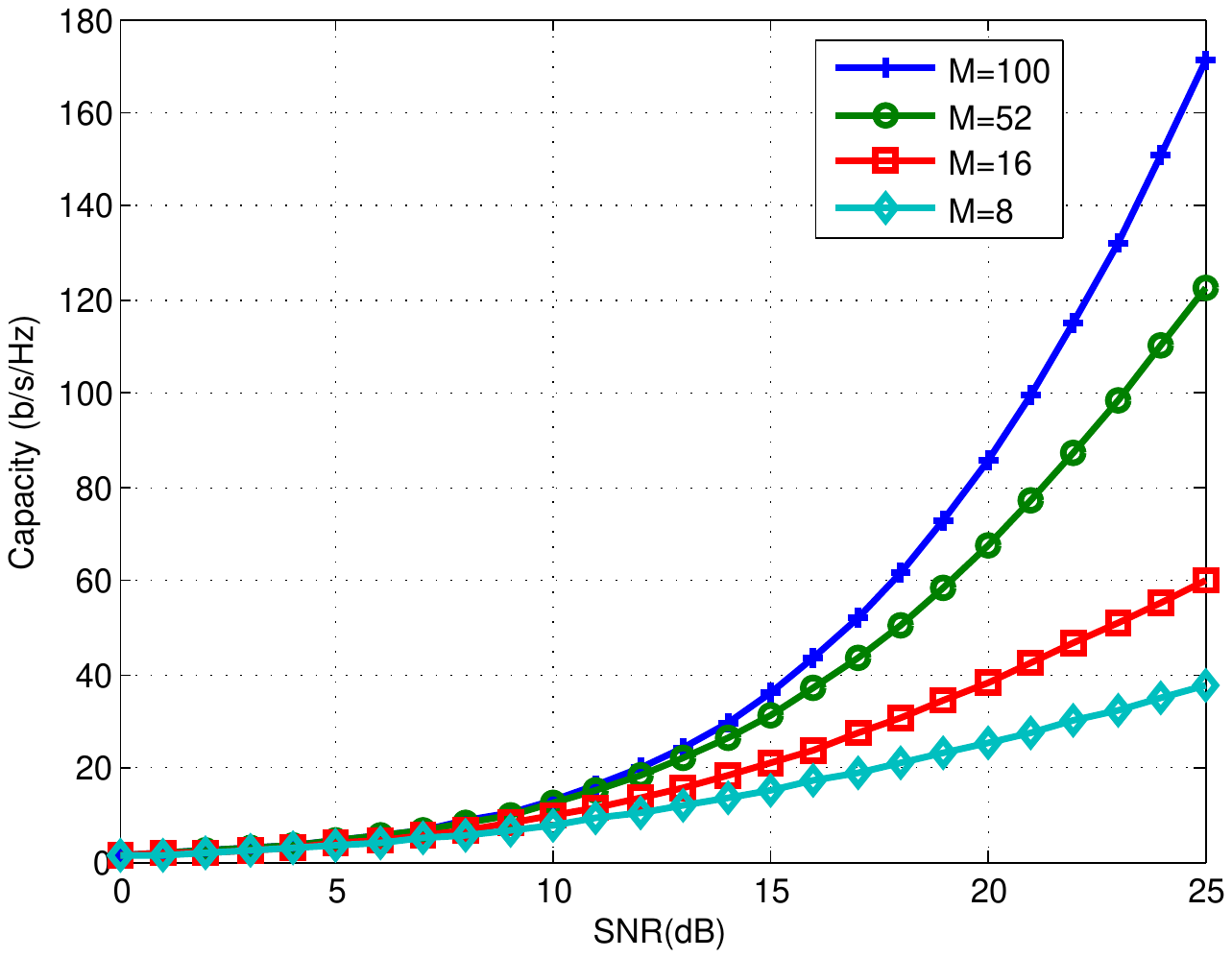}\label{capacity_mdleffect1}}
\subfloat[Effect of MDL for M=100]{\includegraphics[width=0.48\textwidth]{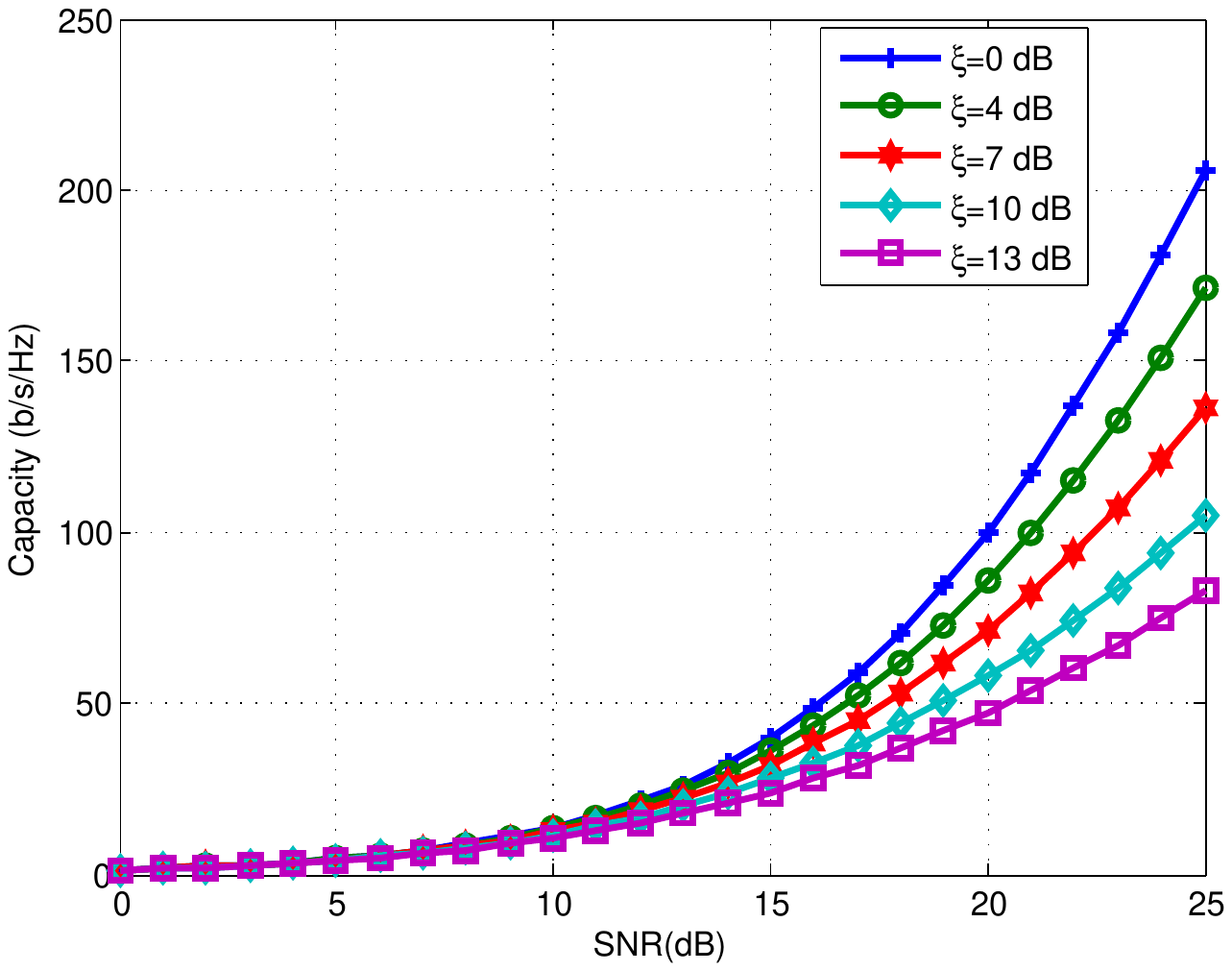}\label{capacity_mdleffect2}}
\caption{Ergodic Capacity Anlysis}
}
\end{figure}

\subsection{Frequency Selective Channel Capacity}
\label{frequency_selective_capacity}
When chromatic and intermodal dispersion are taken into account, the fiber's frequency response $\mathbf{H}\left(\omega\right)$ becomes frequency selective. Under the same linear assumptions as in the previous section, the input-output relationship is given by
\begin{equation}
\label{freq_selective_time_model}
\mathbf{y}\left(t\right)=\mathbf{H}\left(t\right)\star\mathbf{x}\left(t\right)+\mathbf{v}\left(t\right)
\end{equation}
where $\mathbf{v}\left(t\right)$ is a vector Gaussian process, $\mathbf{x}\left(t\right)$ is the input, and $\mathbf{y}\left(t\right)$ is the received signal. Recall, from Section \ref{prop_model}, that $\mathbf{H}\left(\omega_0\right)$ can be written as
\begin{equation}
\label{frequency_flat_model3}
\mathbf{H}\left(\omega_0\right)=\mathbf{U}_{H}\left(\omega_0\right)\mathbf{\Lambda}_{H}\left(\omega_0\right)\mathbf{V}_{H}^{*}\left(\omega_0\right)
\end{equation}
where $\mathbf{U}_{H}\left(\omega_0\right)$, $\mathbf{\Lambda}_{H}\left(\omega_0\right)$, and $\mathbf{V}_{H}^{*}\left(\omega_0\right)$ all depend on $\omega_0$. From \cite{tse_viswanath_wireless_2005}, the capacity of a single instantiation of $\mathbf{H}\left(\omega\right)$, when CSI is not available at the transmitter, is equal to
\begin{equation}
C=\frac{1}{2\pi W}\int_{0}^{2\pi W}\log \det\left(\mathbf{I_{N_r}}+\frac{\SNR}{M}\mathbf{H}\left(\omega\right)\mathbf{H^{*}}\left(\omega\right)\right)d\omega~~~{\tt b/s/Hz}
\end{equation}
where $W$ is the bandwidth of the system in Hz and $\SNR=P/N_0W$ \cite{tse_viswanath_wireless_2005}. This capacity can be achieved by Orthogonal Frequency Division Multiplexing (OFDM) with $N$ sub-carriers (as $N$ tends to infinity). MIMO OFDM modulation is a popular modulation scheme in wireless communications and is currently being developed for the next generation optical systems \cite{Shieh08_OpticalOFDM}. The maximum achievable capacity of a MIMO-OFDM system with $N$ sub-carriers is
\begin{eqnarray}
C&=&\frac{1}{N}\sum_{i=1}^{N}\log \det\left(\mathbf{I_{N_r}}+\frac{\SNR}{M}\mathbf{H}_ {i}\mathbf{H^{*}}_{i}\right) \nonumber \\
&=&\frac{1}{N}\sum_{i=1}^{N}\sum_{n=1}^{M}\log\left(1+\frac{\SNR}{M}\lambda_{n,i}^2\right)~~~{\tt b/s/Hz}\end{eqnarray}
where $\mathbf{H}_{i}=\mathbf{H}\left(\omega_i\right)$ and $\lambda_{n,i}^2$ is the $n^{th}$ eigenvalue of $\mathbf{H}_{i}\mathbf{H^{*}}_{i}$. When CSI is available at the transmitter, waterfilling can be performed to allocate optimal powers across sub-carriers and transmitters.

In the above analysis, we considered the capacity of (\ref{freq_selective_time_model}) for a given instantiation of $\mathbf{H}\left(\omega\right)$. In our work, we focus on analyzing the expected capacity of the frequency selective system which is given by
\begin{equation}
C_{\avg}=\mathbb{E}\left[\frac{1}{N}\sum_{i=1}^{N}\sum_{n=1}^{M}\log\left(1+\frac{\SNR}{M}\tilde{\lambda}_{n,i}^{2}\right)\right]~~~{\tt b/s/Hz}
\end{equation}
To begin with, if we assume that, in each section, all modes experience the same random loss (i.e., the entries of $\mathbf{g}^k$ are perfectly correlated), then $\mathbf{G}^{k}=e^{\frac{1}{2}g^{k}}\mathbf{I_{M}}$. Furthermore, assume that the $K$ sections are statistically identical. Therefore, $\sigma_k=\sigma$ for all $k$ and $\xi^2=K\sigma^2$. In this case, the overall response is given by
\begin{eqnarray}
\label{frequency_selective_flat_model}
\mathbf{H}\left(\omega\right)&=& \mathbf{H}^{K}\left(\omega\right)...\mathbf{H}^{1}\left(\omega\right)\nonumber \\
&=&\mathbf{U}^{K}\mathbf{\Lambda}^{K}\left(\omega\right)\mathbf{V}^{K^{*}}...\mathbf{U}^{1}\mathbf{\Lambda}^{1}\left(\omega\right)\mathbf{V}^{1^{*}} \nonumber \\
&=&e^{\frac{1}{2}\sum_{k=1}^{K}g^{k}}\mathbf{U}^{K}\Theta^{K}\mathbf{T}^{K}\mathbf{A}^{K}\mathbf{V}^{K^{*}}...\mathbf{U}^{1}\Theta^{1}\mathbf{T}^{1}\mathbf{A}^{1}\mathbf{V}^{1^{*}}
\end{eqnarray}
Observe that, even though $\mathbf{H}\left(\omega\right)$ is a function of $\omega$, $\mathbf{H}\left(\omega\right)\mathbf{H}\left(\omega\right)^{*}=e^{\sum_{k=1}^{K}g^{k}}\mathbf{I_M}$ is independent of $\omega$. This means that $\lambda_{n,i}^2=\lambda=e^{\sum_{k=1}^{K}g^{k}}$ is independent of the frequency index $i$ and the mode number $n$. Thus, the average capacity of the fiber is given by
\begin{equation}
C_{\avg}=M\mathbb{E}\left[\log\left(1+\frac{\SNR}{M}\tilde{\lambda}^2\right)\right]
\end{equation}
where the average is taken over the statistics of $\tilde{\lambda}^2$. Observe that the average capacity scales linearly with $M$, the number of modes, and thus the fiber has $M$ degrees of freedom. Therefore, neither group delay nor chromatic dispersion affect the average capacity of the fiber.

We now derive the capacity for the general case (i.e., when the entries of $\mathbf{g}^k$ are potentially independent). The following theorem shows that the statistics of $\mathbf{H}\left(\omega\right)$ are independent of $\omega$.
\begin{theorem}
The statistics of $\mathbf{H}\left(\omega\right)$ are independent of $\omega$.
\end{theorem}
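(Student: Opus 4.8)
The plan is to generalize the argument of Theorem~\ref{unitary_property}. Recall that within a single section the propagation matrix factors as $\mathbf{\Lambda}^{k}\left(\omega\right)=\Theta^{k}\mathbf{T}^{k}\mathbf{A}^{k}\mathbf{G}^{k}$, where $\Theta^{k}$, $\mathbf{T}^{k}$, and $\mathbf{A}^{k}$ are diagonal matrices whose entries have unit modulus while $\mathbf{G}^{k}$ carries the MDLs. The first observation is that \emph{all} of the frequency dependence of $\mathbf{H}\left(\omega\right)$ is confined to $\mathbf{T}^{k}$ and $\mathbf{A}^{k}$, and that the product $\mathbf{D}^{k}\left(\omega\right):=\Theta^{k}\mathbf{T}^{k}\mathbf{A}^{k}$ is, for every fixed $\omega$, a diagonal \emph{unitary} matrix. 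Hence each section response can be written as $\mathbf{H}^{k}\left(\omega\right)=\mathbf{U}^{k}\mathbf{D}^{k}\left(\omega\right)\mathbf{G}^{k}\mathbf{V}^{k^{*}}$, which exposes the entire frequency-dependent factor as a single random diagonal unitary matrix sitting immediately to the right of $\mathbf{U}^{k}$.

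The key step is to absorb $\mathbf{D}^{k}\left(\omega\right)$ into $\mathbf{U}^{k}$, exactly as $\Theta^{k}$ was absorbed in the proof of Theorem~\ref{unitary_property}. Fix $\omega$ and set $\mathbf{W}^{k}=\mathbf{U}^{k}\mathbf{D}^{k}\left(\omega\right)$. Since $\mathbf{U}^{k}$ is a random unitary matrix it is right rotationally invariant by Lemma~\ref{isotropic_unitary}, and it is independent of the propagation phases $\boldsymbol{\theta}^{k}$, $\boldsymbol{\tau}^{k}$, and $\boldsymbol{\alpha}^{k}$ that determine $\mathbf{D}^{k}\left(\omega\right)$. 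Conditioning on a realization of $\mathbf{D}^{k}\left(\omega\right)$ and then marginalizing, the same integral identity used in Theorem~\ref{unitary_property} gives $f\left(\mathbf{W}^{k}\right)=f\left(\mathbf{U}^{k}\right)$ independently of $\omega$. Because $\mathbf{G}^{k}$ and $\mathbf{V}^{k}$ are independent of both $\mathbf{U}^{k}$ and $\mathbf{D}^{k}\left(\omega\right)$, this absorption can be carried out jointly: conditioning on $\mathbf{G}^{k}$, $\mathbf{V}^{k}$, and $\mathbf{D}^{k}\left(\omega\right)$ simultaneously, the conditional law of $\mathbf{U}^{k}\mathbf{D}^{k}\left(\omega\right)\mathbf{G}^{k}\mathbf{V}^{k^{*}}$ coincides with that of $\mathbf{U}^{k}\mathbf{G}^{k}\mathbf{V}^{k^{*}}$, which no longer involves $\mathbf{D}^{k}\left(\omega\right)$. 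Marginalizing over the frequency-dependent phases then yields
\begin{equation}
f\left(\mathbf{H}^{k}\left(\omega\right)\right)=f\left(\mathbf{U}^{k}\mathbf{G}^{k}\mathbf{V}^{k^{*}}\right)
\end{equation}
for every $\omega$, so the law of each section response is frequency independent.

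Finally, I would lift this per-section statement to the full channel $\mathbf{H}\left(\omega\right)=\mathbf{H}^{K}\left(\omega\right)\cdots\mathbf{H}^{1}\left(\omega\right)$ using the independence of the sections. Since the vectors governing distinct sections are mutually independent, the joint law of $\left(\mathbf{H}^{1}\left(\omega\right),\ldots,\mathbf{H}^{K}\left(\omega\right)\right)$ factors into the product of the per-section laws, each of which was just shown to be free of $\omega$; hence the joint law, and therefore the law of the product $\mathbf{H}\left(\omega\right)$, does not depend on $\omega$. The main obstacle is the \emph{joint} absorption step: it is not enough to match the marginal law of $\mathbf{W}^{k}$ to that of $\mathbf{U}^{k}$, since $\mathbf{H}^{k}\left(\omega\right)$ also contains $\mathbf{G}^{k}$ and $\mathbf{V}^{k^{*}}$. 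One must verify that $\mathbf{U}^{k}$ remains independent of $\mathbf{G}^{k}$ and $\mathbf{V}^{k}$ after the diagonal unitary factor is absorbed, which is precisely where the assumed mutual independence of the coupling matrices and the propagation characteristics is essential; once this is secured, the reduction to the frequency-independent product is immediate.
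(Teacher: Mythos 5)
Your proof is correct and takes essentially the same route as the paper's: the paper likewise absorbs the frequency-dependent diagonal unitary factor $\Theta^{k}\mathbf{T}^{k}\mathbf{A}^{k}$ into $\mathbf{U}^{k}$ via the conditioning argument of Theorem~\ref{unitary_property} (right rotational invariance from Lemma~\ref{isotropic_unitary}), concluding that each $\mathbf{H}^{k}\left(\omega\right)$ has the same law as $\mathbf{U}^{k}\mathbf{G}^{k}\mathbf{V}^{k^{*}}$. The only difference is that you spell out the joint-absorption step (conditioning simultaneously on $\mathbf{G}^{k}$, $\mathbf{V}^{k}$, and the phases) and the lifting across independent sections, details the paper compresses into ``using the same technique,'' so this is a tightening of the paper's argument rather than a different one.
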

\begin{proof}
Using the same technique as in the proof of Theorem \ref{unitary_property}, we can show that the statistics of $\mathbf{H}^{k}\left(\omega\right)= \mathbf{U}^{k}\Theta^{k}\mathbf{T}^{k}\mathbf{A}^{k}\mathbf{G}^{k}\mathbf{V}^{k^{*}}$ are the same as the statistics of $\mathbf{H}^{k}=\mathbf{U}^{k}\mathbf{G}^{k}\mathbf{V}^{k^{*}}$ by showing that the distribution of $\mathbf{W}^{k}=\mathbf{U}^{k}\Theta^{k}\mathbf{T}^{k}\mathbf{A}^{k}$ is equal to the distribution of $\mathbf{U}^{k}$. Thus, the statistics of $\mathbf{H}^{k}\left(\omega\right)$ are independent of $\omega$.
\end{proof}
This result shows that the statistics of the eigenvalues of $\mathbf{H}_i\mathbf{H}_i^{*}$ are identical for all $i$. Therefore, the average capacity expression can now be rewritten as
\begin{equation}
C_{\avg}=\sum_{n=1}^{M}\mathbb{E}\left[\log\left(1+\frac{\SNR}{M}\tilde{\lambda}_{n}^{2}\right)\right]~~~{\tt b/s/Hz}
\end{equation}
which is identical to the average capacity of frequency flat optical MIMO systems. Therefore, the results of the previous section carry over to the frequency selective case.

\section{Input-Output Coupling Strategies}
\label{ch:coupling}
The capacity analysis presented in Section \ref{ch:capacity} is important,
but it only serves as an upper limit on the achievable rate. This limit can
only be achieved by making use of all available spatial modes. In theory, one
can always design a fiber with a sufficiently small core radius such that a
desired number of modes propagate through the fiber
\cite{Agarwal2002fiber_optics}. In reality, one has to rely on currently
installed optical fibers and available technologies. The state-of-the art OM3
and OM4 MMF technologies have core radii of 50\,$\mu$m with hundreds of
propagation modes. Unfortunately, having a $100 \times 100$ MIMO system is
neither physically nor computationally realizable at the moment. This means
that a more careful look at the effective channel capacity has to be
considered. This is why we now focus on the case when $N_t$ transmit laser
sources and $N_r$ receivers are used. For most of this section, we
assume that intermodal and chromatic dispersions are negligible. Even though
this may seem like a restriction, this assumption serves to simplify
the discussion and presentation of input-output coupling strategies. The
results and procedures we present offer insight and can be extended to the
more general frequency selective case.
\subsection{Input-Output Coupling Model}
The input coupling is described by $\mathbf{C_{I}}$, an $M  \times  N_t$ matrix, and the output coupling is described by $\mathbf{C_{O}}$, an $N_r  \times  M$ matrix. Here, $M$ is much larger than $N_t$ and $N_r$ and the overall response is given by
\begin{eqnarray}
\label{total_response}
\mathbf{H}_{t}&=&\mathbf{C_{O}}\mathbf{H}^{\left(K\right)}...\mathbf{H}^{\left(1\right)}\mathbf{C_{I}} \nonumber \\
&=&\mathbf{C_{O}}\mathbf{H}\mathbf{C_{I}}
\end{eqnarray}
Therefore, for a single instantiation of $\mathbf{H}_t$, the capacity of the channel is given by
\begin{equation}
\label{overall_system_capacity}
C\left(\mathbf{H_t}\right)=\log \det\left(\mathbf{I_{N_t}}+\frac{\SNR}{N_t}\mathbf{H_tH_t^{*}}\right)
\end{equation}
The input-output coupling coefficients (entries of $\mathbf{C_{I}}$ and $\mathbf{C_{O}}$) are complex quantities capturing the effect of both power and phase coupling into and out of the fiber. These coefficients are determined by the system geometry and launch conditions. For example, in order to study the input coupling profile of each light source one needs to specify its exact geometry and launching angle, and then solve the overlap integrals: two dimensional inner products between the laser's spatial patterns and those of each mode
\begin{equation}c_{ij}=\int\int\phi_{i}\left(x,y\right)\phi_{s_j}\left(x,y\right)dxdy\end{equation}
where $c_{ij}$ is the $\left(i,j\right)^{th}$ entry of $\mathbf{C_I}$, $\phi_{i}\left(x,y\right)$ is the $i^{th}$ mode spatial pattern, and $\phi_{s_j}\left(x,y\right)$ is the $j^{th}$ laser source spatial pattern. However, this procedure is cumbersome and offers little insight on the underlying channel physics. In what follows, we provide a simple condition on the input-output couplers. This condition will prove useful when we present an input-output scheme that maximizes the achievable rate of the overall system (Section \ref{optimal}) and impose a statistical model for $\mathbf{C_I}$ and $\mathbf{C_O}$ (Section \ref{random}).
\begin{theorem} If we neglect the power lost due to input coupling inefficiencies, then a necessary and sufficient condition for $\mathbf{C_{I}}$ to be an input coupling matrix is given by
\begin{equation}\label{condition}\left(\mathbf{c_i},\mathbf{c_j}\right)=\delta_{ij}\end{equation} where $\mathbf{c_i}$ represents the $i^{th}$ column of $\mathbf{C_{I}}$ and $\left(\mathbf{a},\mathbf{b}\right)$ denotes the standard Euclidean inner product between the vectors $\mathbf{a}$ and $\mathbf{b}$. This means that the columns of $\mathbf{C_{I}}$ should form a complete orthonormal basis for $\mathbb{C}^{N_t}$. Similarly, if we neglect the power lost due to output coupling inefficiencies, then the rows of $\mathbf{C_{O}}$ should form a complete orthonormal basis for $\mathbb{C}^{N_r}$.\end{theorem}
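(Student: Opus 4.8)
The plan is to interpret the phrase ``neglect the power lost due to input coupling inefficiencies'' as the statement that the input coupler is \emph{lossless}, i.e.\ that it preserves signal power. Physically, if $\mathbf{x}\in\mathbb{C}^{N_t}$ collects the complex amplitudes emitted by the $N_t$ sources, then the field launched into the fiber is $\mathbf{C_I}\mathbf{x}\in\mathbb{C}^{M}$, the power coupled in is $\norm{\mathbf{C_I}\mathbf{x}}^2$, and the power supplied by the sources is $\norm{\mathbf{x}}^2$. No power is lost precisely when $\norm{\mathbf{C_I}\mathbf{x}}^2=\norm{\mathbf{x}}^2$ for every $\mathbf{x}$. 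The whole argument then reduces to the linear-algebra fact that norm preservation is equivalent to orthonormality of the columns.

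First I would establish sufficiency, which is immediate: if $\left(\mathbf{c_i},\mathbf{c_j}\right)=\delta_{ij}$ then $\mathbf{C_I}^{*}\mathbf{C_I}=\mathbf{I_{N_t}}$, so $\norm{\mathbf{C_I}\mathbf{x}}^2=\mathbf{x}^{*}\mathbf{C_I}^{*}\mathbf{C_I}\mathbf{x}=\mathbf{x}^{*}\mathbf{x}=\norm{\mathbf{x}}^2$ and no power is lost. For necessity I would start from $\mathbf{x}^{*}\left(\mathbf{C_I}^{*}\mathbf{C_I}-\mathbf{I_{N_t}}\right)\mathbf{x}=0$ for all $\mathbf{x}$. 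Since $\mathbf{C_I}^{*}\mathbf{C_I}-\mathbf{I_{N_t}}$ is Hermitian, a quadratic form that vanishes identically forces the matrix itself to vanish; I would make this explicit by evaluating the form on the standard basis vectors $\mathbf{e}_p$ (giving the diagonal entries) and on $\mathbf{e}_p+\mathbf{e}_q$ and $\mathbf{e}_p+j\mathbf{e}_q$ (recovering the real and imaginary parts of the off-diagonal entries), or equivalently by invoking the polarization identity. This yields $\mathbf{C_I}^{*}\mathbf{C_I}=\mathbf{I_{N_t}}$, whose $\left(i,j\right)$ entry is exactly $\left(\mathbf{c_i},\mathbf{c_j}\right)=\delta_{ij}$; hence the $N_t$ columns are orthonormal and span an $N_t$-dimensional space, as claimed.

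The output coupler is handled by a dual argument. Here one cannot preserve the norm of a full $M$-dimensional received field when mapping down to $\mathbb{C}^{N_r}$, so I would phrase losslessness in terms of the adjoint: by reciprocity one may view $\mathbf{C_O}^{*}$ (an $M\times N_r$ matrix) as the coupler carrying the $N_r$ detector ports into the fiber, and the absence of output coupling loss is the requirement that this adjoint map preserve power. Applying the argument of the previous paragraph to $\mathbf{C_O}^{*}$ gives $\mathbf{C_O}\mathbf{C_O}^{*}=\mathbf{I_{N_r}}$, i.e.\ the rows of $\mathbf{C_O}$ are orthonormal. Equivalently, this is precisely the condition that leaves the modal noise white, $\mathbf{C_O}\left(N_0\mathbf{I_M}\right)\mathbf{C_O}^{*}=N_0\mathbf{I_{N_r}}$, so that no effective SNR is forfeited at the detectors.

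I expect the main obstacle to be conceptual rather than computational: choosing the correct formalization of ``power lost due to coupling inefficiencies,'' particularly for the output coupler where a strict isometry from $\mathbb{C}^{M}$ onto the lower-dimensional $\mathbb{C}^{N_r}$ is impossible. Once losslessness is correctly identified with norm preservation of the appropriate (adjoint) map, the remainder is the standard and routine verification that the Gram condition $\mathbf{C_I}^{*}\mathbf{C_I}=\mathbf{I_{N_t}}$ is identical to orthonormality of the columns.
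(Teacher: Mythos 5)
Your proposal is correct and follows the same skeleton as the paper's proof: you formalize ``no coupling loss'' as norm preservation, prove sufficiency by direct computation, and reduce necessity to the vanishing of the Hermitian quadratic form $\mathbf{x}^{*}\left(\mathbf{C_I^{*}C_I}-\mathbf{I_{N_t}}\right)\mathbf{x}$, concluding $\mathbf{C_I^{*}C_I}=\mathbf{I_{N_t}}$. Two points of divergence are worth recording. First, for the lemma ``a matrix whose quadratic form vanishes identically is zero,'' the paper unitarily diagonalizes $\mathbf{B}=\mathbf{C_I^{*}C_I}-\mathbf{I_{N_t}}$ and verifies the claim for diagonal matrices, whereas you test the form on $\mathbf{e}_p$, $\mathbf{e}_p+\mathbf{e}_q$, and $\mathbf{e}_p+j\mathbf{e}_q$ (i.e., polarization); your route is more elementary, avoids the spectral theorem, and over $\mathbb{C}$ does not even require Hermiticity of $\mathbf{B}$. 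Second, and more substantively, the paper dismisses the output coupler with ``a similar proof can be carried out,'' which taken literally fails: a wide $N_r\times M$ matrix has a nontrivial kernel, so $\lVert\mathbf{C_O}\mathbf{y}\rVert=\lVert\mathbf{y}\rVert$ cannot hold for all $\mathbf{y}\in\mathbb{C}^{M}$, and the input-side argument does not transfer verbatim. Your reformulation via reciprocity---requiring the adjoint map $\mathbf{C_O^{*}}:\mathbb{C}^{N_r}\to\mathbb{C}^{M}$ to be an isometry, equivalently that the coupler leave the modal noise white---identifies and repairs exactly this gap while still delivering the intended conclusion $\mathbf{C_O}\mathbf{C_O^{*}}=\mathbf{I_{N_r}}$. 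So on the input side your proof matches the paper's; on the output side it is more careful than what the paper actually wrote.
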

\begin{proof} Satisfying the energy conservation principle requires that
\begin{equation}
\label{energyconservation}
||\mathbf{C_{I}x}||^2 = ||\mathbf{x}||^2~~ \forall \mathbf{x} \in \mathbb{C}^{N_t}
\end{equation}
This means that the energy of the input vector should be equal to the energy of the mode vector at the input of the fiber. This condition holds whenever the mapping $\mathbf{C_{I}}$ is a linear isometry mapping. In the special case where $\mathbf{C_{I}}$ is a square matrix, a classical result in linear algebra states that $\mathbf{C_{I}}$ has to be a unitary matrix \cite{shilov1971linear}. However, $\mathbf{C_{I}}$ is a tall $M  \times  N_t$ ($M \gg N_t$) rectangular matrix. In this case, the condition in (\ref{energyconservation}) can be rewritten as
\begin{eqnarray}
\label{condition}
 \left(\mathbf{C_{I}x},\mathbf{C_{I}x}\right)&=&\left(\mathbf{x},\mathbf{x}\right)~~ \forall \mathbf{x} \in \mathbb{C}^{N_t}
 \end{eqnarray}
 or equivalently as
 \begin{equation}
 \label{condition2}
 \left(\mathbf{x},\mathbf{\left[C_{I}^{*}C_{I}-I_{N_t}\right]x}\right)=0~~ \forall \mathbf{x} \in \mathbb{C}^{N_t}
\end{equation}
If $\mathbf{C_{I}^{*}C_{I}=I_{N_t}}$, the condition in (\ref{condition2})
holds and $\mathbf{C_{I}}$ preserves the norm. This choice ensures that the
columns of $\mathbf{C_{I}}$ form a complete orthonormal basis for
$\mathbb{C}^{N_t}$. However, this only proves the sufficiency part of the
theorem. To prove the necessity part, we consider
$\mathbf{B}=\mathbf{C_{I}^{*}C_{I}-I_{N_t}}$ and show that if
(\ref{condition2}) holds, then it is equal to zero. It can be easily verified
that if $\mathbf{B}$ is a diagonal matrix, then
$\left(\mathbf{x},\mathbf{Bx}\right)=0 ~~ \forall \mathbf{x} \in
\mathbb{C}^{N_t}$ implies that $\mathbf{B = 0}$. The same observation holds
if $\mathbf{B}$ is diagonalizable. In this case, one can choose an
orthonormal basis of eigenvectors and map
$\left(\mathbf{x},\mathbf{Bx}\right)=0 ~~ \forall \mathbf{x} \in
\mathbb{C}^{N_t}$ to $\left(\mathbf{\tilde{x}},\mathbf{D\tilde{x}}\right)=0
~~ \forall \tilde{\mathbf{x}} \in \mathbb{C}^{N_t}$ where $\mathbf{D}$ is a
diagonal matrix. In our case, $\mathbf{B}$ is Hermitian and hence it is
unitarily diagonalizable so that $\mathbf{B = 0}$ or alternatively,
$\mathbf{C_{I}^{*}C_{I}=I_{N_t}}$ as desired. A similar proof can be carried
out to show that $\mathbf{C_{O}C_{O}^{*}=I_{N_r}}$.
\end{proof} We note that even though $\mathbf{C_{I}^{*}C_{I}=I_{N_t}}$, $\mathbf{C_{I}C_{I}^{*}\neq I_{N_t}}$ because $\mathbf{C_{I}}$ is of full column rank, but not of full row rank.
We now use this property to show that $\mathbf{C_{I}}$ and $\mathbf{C_{O}}$ should have a special structure.
\begin{theorem} The input-output coupling matrices $\mathbf{C_{I}}$ and $\mathbf{C_{O}}$ can be expressed as
\begin{equation}
\label{in}
\mathbf{C_{I}=U_{I}\left[
                     \begin{array}{c}
                       \mathbf{I_{N_t}} \\
                       \mathbf{0_{\left(M-N_t\right) \times  N_t}} \\
                     \end{array}
                   \right]
V_{I}^{*}}\end{equation}
\begin{equation}
\label{out}
\mathbf{C_{O}=U_{O}\left[I_{N_r} 0_{N_r \times  \left(M-N_r\right)}\right]V_{O}^{*}}
\end{equation}where $\mathbf{U_{I}}$ and $\mathbf{V_{O}^{*}}$ are $M \times  M$ unitary matrices, $\mathbf{V_{I}^{*}}$ is an $N_t \times  N_t$ unitary matrix, and $\mathbf{U_{O}}$ is an $N_r \times  N_r$ unitary matrix.\end{theorem}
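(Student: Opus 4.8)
The plan is to read off both factorizations from the singular value decomposition (SVD), using the isometry conditions $\mathbf{C_{I}^{*}C_{I}=I_{N_t}}$ and $\mathbf{C_{O}C_{O}^{*}=I_{N_r}}$ from the preceding theorem to pin the singular values to one. The guiding observation is that an isometry has \emph{all} singular values equal to one, which collapses the general SVD to the stated block form.

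First, for the tall $M \times N_t$ matrix $\mathbf{C_I}$ I would write its full SVD as $\mathbf{C_I}=\mathbf{U_I}\mathbf{\Sigma_I}\mathbf{V_I^{*}}$, where $\mathbf{U_I}$ is $M \times M$ unitary, $\mathbf{V_I}$ is $N_t \times N_t$ unitary, and $\mathbf{\Sigma_I}$ is the $M \times N_t$ matrix whose top $N_t \times N_t$ block is $\diag\left(\sigma_1,\ldots,\sigma_{N_t}\right)$ and whose remaining rows vanish. Substituting into the isometry condition gives $\mathbf{C_I^{*}C_I}=\mathbf{V_I}\mathbf{\Sigma_I^{*}}\mathbf{\Sigma_I}\mathbf{V_I^{*}}=\mathbf{I_{N_t}}$; since $\mathbf{\Sigma_I^{*}}\mathbf{\Sigma_I}=\diag\left(\sigma_1^2,\ldots,\sigma_{N_t}^2\right)$, conjugating by $\mathbf{V_I}$ forces $\sigma_i^2=1$ for every $i$. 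As singular values are non-negative, each $\sigma_i=1$, so the top block of $\mathbf{\Sigma_I}$ is $\mathbf{I_{N_t}}$ and $\mathbf{\Sigma_I}$ is exactly the stack of $\mathbf{I_{N_t}}$ over a zero block appearing in (\ref{in}), with $\mathbf{U_I}$ and $\mathbf{V_I^{*}}$ as claimed.

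Second, I would repeat the argument for the wide $N_r \times M$ matrix $\mathbf{C_O}$, writing $\mathbf{C_O}=\mathbf{U_O}\mathbf{\Sigma_O}\mathbf{V_O^{*}}$ with $\mathbf{U_O}$ an $N_r \times N_r$ unitary, $\mathbf{V_O}$ an $M \times M$ unitary, and $\mathbf{\Sigma_O}$ an $N_r \times M$ matrix whose left $N_r \times N_r$ block is diagonal. The condition $\mathbf{C_O}\mathbf{C_O^{*}}=\mathbf{U_O}\mathbf{\Sigma_O}\mathbf{\Sigma_O^{*}}\mathbf{U_O^{*}}=\mathbf{I_{N_r}}$ now gives $\mathbf{\Sigma_O}\mathbf{\Sigma_O^{*}}=\mathbf{I_{N_r}}$, so once more all singular values equal one and the left block of $\mathbf{\Sigma_O}$ is $\mathbf{I_{N_r}}$, yielding the form in (\ref{out}).

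There is no substantive obstacle here: the result is essentially a one-line corollary of the SVD once the isometry property is available. The only points that require care are bookkeeping ones, namely using the correct rectangular (tall versus wide) block shape of $\mathbf{\Sigma}$ in each case, and invoking non-negativity of singular values so that $\sigma_i^2=1$ forces $\sigma_i=1$ rather than $\sigma_i=-1$. One could instead argue via a polar or $QR$-type factorization, but the SVD route is the most transparent.
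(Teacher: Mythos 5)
Your proposal is correct and follows exactly the paper's own route: the paper likewise writes the SVD $\mathbf{C_{I}=U_{I}\Lambda_{I}V_{I}^{*}}$, $\mathbf{C_{O}=U_{O}\Lambda_{O}V_{O}^{*}}$ and notes that the nonzero singular values are the square roots of the eigenvalues of $\mathbf{C_{I}^{*}C_{I}}=\mathbf{I_{N_t}}$ (and analogously $\mathbf{C_{O}C_{O}^{*}}=\mathbf{I_{N_r}}$), forcing all singular values to equal one. You merely spell out the conjugation and non-negativity steps that the paper leaves implicit, which is fine.
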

\begin{proof}
By the singular value decomposition (SVD), $\mathbf{C_{I}=U_{I}\Lambda_{I}V_{I}^{*}}$ and $\mathbf{C_{O}=U_{O}\Lambda_{O}V_{O}^{*}}$ \cite{shilov1971linear}. The non-zero singular values of $\mathbf{C_{I}}$ are the square roots of the eigenvalues of $\mathbf{C_{I}^{*}C_{I}}$, and $\mathbf{C_{I}^{*}C_{I}}=\mathbf{I_{N_t}}$. A similar argument holds for $\mathbf{C_{O}}$.
\end{proof}
\subsection{Input-Output Coupling Strategies}
\label{optimal}
In this section, we assume that $N_t=N_r$. When CSI is available at the transmitter and the design of $\mathbf{C_{I}}$ and $\mathbf{C_{O}}$ is affordable, a desirable choice for the input-output couplers is the one that maximizes the system's capacity
\begin{eqnarray}
\label{max_capacity}
\left(\mathbf{C^{opt}_{I}},\mathbf{C^{opt}_{O}}\right)&=&\arg \max_{\left(\mathbf{C_{I}},\mathbf{C_{O}}\right)} \log \det\left(\mathbf{I_{N_r}}+\frac{\SNR}{N_t}\mathbf{H_tH_t^{*}}\right) \nonumber  \\
&=& \arg \max_{\left(\mathbf{C_{I}},\mathbf{C_{O}}\right)} \sum_{n=1}^{N}\log\left(1+ \frac{\SNR}{N_t}\lambda_n^2\right)
\end{eqnarray}
where the $\lambda_n^2$'s are the eigenvalues of $\mathbf{H_tH_t^{*}}$. We note that $\mathbf{C^{opt}_{I}}$ and $\mathbf{C^{opt}_{O}}$ should have a structure compliant with (\ref{in}) and (\ref{out}), respectively. Instead of solving the above constrained optimization problem, we provide an intuitive choice for $\left(\mathbf{C_{I}},\mathbf{C_{O}}\right)$ and argue that it leads to a maximized overall capacity through simulations.
\begin{prop}
The capacity of the overall system in (\ref{overall_system_capacity}) is independent of the choice of $\mathbf{V_{I}^{*}}$ and $\mathbf{U_{O}}$ from (\ref{in}) and (\ref{out}).
\end{prop}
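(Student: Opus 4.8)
The plan is to substitute the structured coupler forms (\ref{in}) and (\ref{out}) directly into $\mathbf{H_t}=\mathbf{C_OHC_I}$ and then analyze the Gram matrix $\mathbf{H_tH_t^{*}}$, since by (\ref{overall_system_capacity}) and the eigenvalue expansion in (\ref{max_capacity}) the capacity depends on the couplers only through the eigenvalues $\lambda_n^2$ of $\mathbf{H_tH_t^{*}}$. Let $\mathbf{E_I}$ and $\mathbf{E_O}$ denote the two constant selection blocks appearing in (\ref{in}) and (\ref{out}), so that $\mathbf{C_I}=\mathbf{U_IE_IV_I^{*}}$ and $\mathbf{C_O}=\mathbf{U_OE_OV_O^{*}}$, and hence $\mathbf{H_t}=\mathbf{U_OE_OV_O^{*}HU_IE_IV_I^{*}}$. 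I would then write out $\mathbf{H_tH_t^{*}}$ in full and track where each of the two disputed unitaries goes.

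The first key step eliminates $\mathbf{V_I^{*}}$. In the product $\mathbf{H_tH_t^{*}}$ the rightmost factor of $\mathbf{H_t}$ is $\mathbf{V_I^{*}}$ while the leftmost factor of $\mathbf{H_t^{*}}$ is $\mathbf{V_I}$; because $\mathbf{V_I}$ is $N_t\times N_t$ unitary we have $\mathbf{V_I^{*}V_I=I_{N_t}}$, so these two factors annihilate each other. Thus $\mathbf{V_I^{*}}$ drops out of $\mathbf{H_tH_t^{*}}$ entirely, and therefore out of its spectrum. The second key step handles $\mathbf{U_O}$: after the cancellation the Gram matrix takes the form $\mathbf{H_tH_t^{*}}=\mathbf{U_OMU_O^{*}}$, where $\mathbf{M}=\mathbf{E_OV_O^{*}HU_IE_IE_I^{*}U_I^{*}H^{*}V_OE_O^{*}}$ contains neither $\mathbf{U_O}$ nor $\mathbf{V_I^{*}}$. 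Since $\mathbf{U_O}$ is unitary, this is a unitary similarity transformation, which leaves the eigenvalues of $\mathbf{M}$ unchanged. Hence the $\lambda_n^2$ are also independent of $\mathbf{U_O}$, and because $C(\mathbf{H_t})=\sum_n\log(1+\tfrac{\SNR}{N_t}\lambda_n^2)$ depends on the couplers solely through these eigenvalues, the capacity is independent of both $\mathbf{V_I^{*}}$ and $\mathbf{U_O}$.

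I do not expect a genuine analytic obstacle here: the computation is essentially bookkeeping, and the dimensions are consistent because $N_t=N_r$ in this section. The one point I would state with care is conceptual rather than technical. Both $\mathbf{V_I^{*}}$ and $\mathbf{U_O}$ do genuinely change $\mathbf{H_t}$ itself, so the claim is \emph{not} visible at the level of $\mathbf{H_t}$; it emerges only when passing to $\mathbf{H_tH_t^{*}}$, through two distinct mechanisms---exact cancellation of the inner unitary $\mathbf{V_I^{*}}$ against its adjoint, and similarity-invariance of the spectrum under the outer unitary $\mathbf{U_O}$. This is precisely what licenses the design in Section \ref{optimal} to fix only $\mathbf{U_I}$ and $\mathbf{V_O^{*}}$ when maximizing the achievable rate.
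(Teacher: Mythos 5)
Your proof is correct and takes essentially the same route as the paper's: both substitute the structured SVD forms into $\mathbf{H_tH_t^{*}}$, cancel the inner unitary via $\mathbf{V_I^{*}V_I=I_{N_t}}$, and eliminate $\mathbf{U_O}$ as an outer unitary similarity. The only cosmetic difference is that you argue through the invariance of the eigenvalues of $\mathbf{H_tH_t^{*}}$, whereas the paper drops $\mathbf{U_O}$ directly inside the $\log\det$ expression, which amounts to the same invariance.
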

\begin{proof}
The capacity of the overall system is given by
\begin{eqnarray}
C\left(\mathbf{H_t}\right)&=&\log \det\left(\mathbf{I_{N_t}}+\frac{\SNR}{N_t}\mathbf{H_tH_t^{*}}\right) \nonumber \\
&=&\log \det\left(\mathbf{I_{N_t}}+\frac{\SNR}{N_t}\mathbf{\mathbf{C_{O}}\mathbf{H}\mathbf{C_{I}}\mathbf{C_{I}}^{*}\mathbf{H}^{*}\mathbf{C_{O}^{*}}}\right) \nonumber \\
&=&\log \det\left(\mathbf{I_{N_t}}+\frac{\SNR}{N_t}\mathbf{U_{O}\Lambda_{O}V_{O}^{*}HU_{I}\Lambda_{I}\Lambda_{I}^{*}U_{I}^{*}H^{*}V_{O}\Lambda_{O}^{*}U_{O}^{*}}\right) \nonumber \\
&=&\log \det\left(\mathbf{I_{N_t}}+\frac{\SNR}{N_t}\mathbf{\Lambda_{O}V_{O}^{*}HU_{I}\Lambda_{I}\Lambda_{I}^{*}U_{I}^{*}H^{*}V_{O}\Lambda_{O}^{*}}\right)
\end{eqnarray}
Therefore, the capacity of the overall system is independent of $\mathbf{V_{I}^{*}}$ and $\mathbf{U_{O}}$ and hence, without loss of generality, we will assume that they are both equal to the identity matrix.
\end{proof}
The following input-output coupling scheme is suggested
\begin{equation}
\label{input-choice}
\mathbf{C_{I}}=\mathbf{V_{H}\left[
                     \begin{array}{c}
                       \mathbf{I_{N_t}} \\
                       \mathbf{0_{\left(M-N_t\right) \times  N_t}} \\
                     \end{array}
                   \right]}\end{equation}
\begin{equation}
\label{output-choice}
\mathbf{C_{O}}=\mathbf{[I_{N_r} 0_{\left(M-N_t\right) \times  N_t}]U_{H}^{*}}
\end{equation}
where $\mathbf{V_{H}}$ and $\mathbf{U_{H}}$ have been defined in (\ref{channel_svvd}). Choosing $\mathbf{V_{O}=U_H}$, $\mathbf{U_{O}=I_{N_t}}$, $\mathbf{U_{I}=V_H}$, and $\mathbf{V_{I}=I_{N_t}}$ leads to an overall response given by
\begin{eqnarray}
\label{proof}
\mathbf{H}_{t}&=&\mathbf{\left[I_{N} 0_{\left(M-N\right) \times  N}\right]\Lambda_{H}\left[
                     \begin{array}{c}
                       \mathbf{I_{N}} \\
                       \mathbf{0_{\left(M-N\right) \times  N}} \\
                     \end{array}
                   \right]} \\ \nonumber
&=&\diag\left(\lambda_1,...,\lambda_{N}\right)
\end{eqnarray}
Thus, the overall MIMO channel is transformed into a set of parallel AWGN channels. Moreover, since the SVD in (\ref{channel_svvd}) sorts the singular values in decreasing order, the signal energy has been restricted to the $N_t$ (out of $M$) least lossy end-to-end eigenmodes.
The capacity achieved by this choice of input-output coupling is
\begin{equation}
\label{boundi}
C\left(\mathbf{H}_{t}\right)=\sum_{n=1}^{N_t}\log\left(1+ \frac{\SNR}{N_t}e^{\rho_n}\right)~~~{\tt b/s/Hz}
\end{equation}
This capacity could be further increased by pre-processing $\mathbf{x}$ via a
diagonal power allocation matrix $\mathbf{K}$ using waterfilling. Our
strategy is intuitive since we only have $N_t$ degrees of freedom so it would
be wise if we use the $N_t$ least lossy end-to-end eigenmodes to transmit. We
note that even though the effective end-to-end fiber response shows that we
have used the $N_t$ best end-to-end eigenmodes only, $N_t$ signals were
coupled to and collected from all the available physical modes at the input
and output of the fiber.
Nonetheless, achieving (\ref{boundi}) requires, as discussed before, having CSI at transmitter and using adaptive spatial filters which is typically hard to implement. Even though we did not prove that the above strategy is capacity optimal, our simulations section will show that it appears to maximize the capacity of the overall system.

\subsection{Random Input-Output Coupling}
\label{random}
The design of reconfigurable input-output couplers is expensive and assumes the availability of CSI at the transmitter (which is only feasible when the channel is varying slowly).  More importantly, in many cases, the coupling coefficients are affected by continuous vibrations and system disturbances. Thus, full control over $\mathbf{C_{I}}$ and $\mathbf{C_{O}}$ is not always affordable. In this section, we analyze the capacity when the user does not have control over $\mathbf{C_{I}}$ and $\mathbf{C_{O}}$. This will give us better insight on the achievable capacity of MIMO MMF systems.
We model the coupling coefficients as time varying random variables and impose a physically inspired distribution that respects both the fundamental energy preservation constraint and the maximum entropy principle. Even though we focus on describing the statistical model of $\mathbf{C_{I}}$, our discussion applies equally well for $\mathbf{C_{O}}$.

For an $M \times M$ square matrix $\mathbf{A}$, the energy conservation
principle confirms that $\mathbf{A}$ should belong to
$\mathbb{U}\left(M\right)$. It was proven in \cite{Mezzadri2007_randomMatrix}
that since a Haar measure exists over $\mathbb{U}\left(M\right)$, one could
define a uniform distribution over $\mathbb{U}\left(M\right)$. Therefore, we
choose a random setting where the input coupling matrix $\mathbf{C_{I}}$ has
its $N_t$ columns randomly selected from a square matrix $\mathbf{A}$ that is
uniformly distributed over $\mathbb{U}\left(M\right)$. This distribution
ensures that the columns of $\mathbf{C_I}$ form a complete orthonormal basis
for $\mathbb{C}^{N_t}$ and gives equal probability measure for all such
possible vectors. In other words, $\mathbf{C_{I}}$ is uniformly distributed
over a Stiefel manifold $\mathbb{V}_{N_t}\left(\mathbb{C}^M\right)$. Appendix
\ref{appendix_random_matrices} shows how we can generate $\mathbf{C_{I}}$ and
$\mathbf{C_{O}}$ from an $M  \times  M$ matrix with i.i.d. Gaussian entries.
The ergodic capacity of the overall system can now be computed by averaging
over the statistics of the input-output couplers and the statistics of the
fiber response. Similarly, one could also compute the probability of an
outage event by obtaining the cumulative distribution function (CDF) of the
capacity, which now depends on the statistics of $\mathbf{C_{I}}$ and
$\mathbf{C_{O}}$.
\subsection{Discussion}
We have evaluated the capacity of both controlled and uncontrolled MIMO MMF systems. As discussed in Section \ref{optimal}, the controlled case refers to the case when CSI is available at the transmitter side and there is full control over the input-output couplers. The uncontrolled case refers to the random coupling model presented in Section \ref{random}. In our simulations, we numerically computed the ergodic capacity using (\ref{average_capacity}), with $\mathbf{H}$ replaced by $\mathbf{H_t=C_OHC_I}$. The average in (\ref{average_capacity}) is taken over the statistics of the channel and the input-output couplers for the uncontrolled case. For reference, we included plots of the capacity when
 \begin{itemize}
 \item all mode dependent losses are equal to zero and there is no mode
     coupling ($K=1$ and $\xi=0$); hence the channel has unity
     eigenmodes. A fiber with such properties will be referred to as an
     \emph{ideal fiber}.
 \item the fiber core radius is chosen so that only $N_t$ modes can propagate. In this case the input and output coupling matrices are unitary matrices.
 \end{itemize}
In this analysis, we consider $N_t= N_r= 4$, $K = 256$, $\xi = 4$ dB, and $M = 100$.
\begin{figure}[t]
\centering
\includegraphics[scale=0.6]{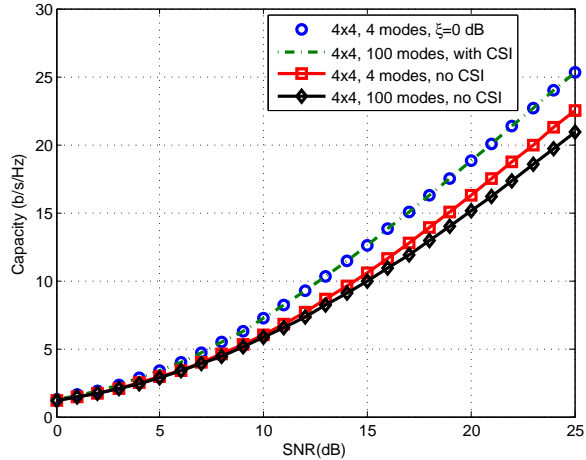}
\caption{Achievable capacity of a $4 \times 4$ MIMO MMF system}
\label{overallcapacity}
\end{figure}
Comparing Figures \ref{overallcapacity} and \ref{capacity_mdleffect2}, we
observe that the capacity of a $4 \times 4$ system over a $100$-mode fiber
is inferior to the intrinsic capacity of the fiber (i.e., when all the modes
are used). This result is expected since we are using $4$ out of $100$
available degrees of freedom. At moderate $\SNR$ values the loss in capacity
is about $6$ dB. On the other hand, observe, from Figure \ref{overallcapacity},
that the performance of an uncontrolled  $4 \times 4$ system over a
$100$-mode fiber is close to that of a system with $4$ modes. Thus,
currently installed fibers could be used without significant loss in
capacity. We also note that, by using the input-output coupling strategy presented in Section \ref{optimal}, performance equal to that of an ideal
fiber can be achieved. This is explained by revisiting Figure
\ref{mdl_distribution} which shows the probability distribution of the
end-to-end MDL values when $M=100$. We observe that in this case we only use
the best $4$ eigenmodes to transmit the signal. As such, it is highly
probable that these $4$ (out of $100$) modes will have close to zero
end-to-end mode-dependent losses, and thus the performance is almost equal to that of an ideal fiber (even when $\xi$ is large). The larger $M$ is, the
closer the capacity of a controlled fiber can get to that of an ideal fiber.
Finally, one could argue that coupling a reasonable number of inputs to a
fiber with hundreds of modes is advantageous since the fiber's peak power
constraint is proportional to the number of modes (recall that more
propagation modes means larger core radius). This means that compared to an
$N_t$-mode fiber, a higher capacity could be achieved if we signal over an $M
\gg N_t$-mode fiber since the total power budget can now be increased.

\section{Conclusion}
\label{ch:conclusion}
MIMO communications over optical fibers is an attractive solution to the ever
increasing demand for Internet bandwidth. We presented a propagation model
that takes input-output coupling into account for MIMO MMF systems. A
coupling strategy was suggested and simulations showed that the capacity of
an $N_t \times  N_t$ MIMO system over a fiber with $M  \gg N_t$ modes can
approach the capacity of an ideal fiber with $N_t$ modes. A random
input-output coupling model was used to describe the behavior of the system
when the design of the input-output couplers is not available. The results
illustrated that, under random coupling, the capacity of an $N_t \times N_t$ MIMO
system over a fiber with $M \gg  N_t$ modes is almost equal to that of an
$N_t$-mode fiber.


\appendix
\section{Random Unitary Matrices}
\label{appendix_random_matrices}
Our method for generating random unitary matrices is based on the QR
decomposition procedure \cite{Mezzadri2007_randomMatrix}. In this case,
$\mathbf{A}$ is constructed as follows:
\begin{enumerate}
\item  Generate an $M  \times  M$ matrix $\mathbf{Z}$ with i.i.d. complex
    Gaussian entries.
\item  Obtain the QR decomposition of $\mathbf{Z}$; $\mathbf{Z=QR}$.
\item  Form the following diagonal matrix:
\begin{equation}
\mathbf{\Lambda}=\left(
               \begin{array}{cccc}
                 \frac{r_{11}}{|r_{11}|} &  &  &  \\
                  & \frac{r_{22}}{|r_{22}|} &  &  \\
                  &  &  \ddots &  \\
                  &  &  & \frac{r_{MM}}{|r_{MM}|} \\
               \end{array}
             \right)
\end{equation}
where $\{r_{ii}\}_{i=1}^{M}$ are the diagonal entries of $\mathbf{R}$.
\item Let $\mathbf{A}=\mathbf{\Lambda Q}$.
\end{enumerate}
In the above construction, $\mathbf{A}$ is obviously unitary since
$\mathbf{Q}$ is unitary. Furthermore, it can be shown that $\mathbf{A}$ has a
uniform distribution over $\mathbb{U}\left(M\right)$. To generate the input
coupling matrix, the following method is used:
\begin{enumerate}
\item  Generate an $M  \times  M$ unitary matrix $\mathbf{A}$ (as described
    above).
\item  Choose $N_t$ columns randomly from $\mathbf{A}$ to form
    $\mathbf{C_{I}}$.
\end{enumerate}
A similar approach can be taken to generate $\mathbf{C_{O}}$. In this case,
$N_r$ columns are selected randomly from $\mathbf{A}$ to represent the rows
of $\mathbf{C_{O}}$.

\bibliographystyle{IEEEtran}
\bibliography{IEEEfull,mimo_optical}

\begin{thebibliography}{10}
\providecommand{\url}[1]{#1}
\csname url@samestyle\endcsname
\providecommand{\newblock}{\relax}
\providecommand{\bibinfo}[2]{#2}
\providecommand{\BIBentrySTDinterwordspacing}{\spaceskip=0pt\relax}
\providecommand{\BIBentryALTinterwordstretchfactor}{4}
\providecommand{\BIBentryALTinterwordspacing}{\spaceskip=\fontdimen2\font plus
\BIBentryALTinterwordstretchfactor\fontdimen3\font minus
  \fontdimen4\font\relax}
\providecommand{\BIBforeignlanguage}[2]{{%
\expandafter\ifx\csname l@#1\endcsname\relax
\typeout{** WARNING: IEEEtran.bst: No hyphenation pattern has been}%
\typeout{** loaded for the language `#1'. Using the pattern for}%
\typeout{** the default language instead.}%
\else
\language=\csname l@#1\endcsname
\fi
#2}}
\providecommand{\BIBdecl}{\relax}
\BIBdecl

\bibitem{Tkach2012}
R.~W. Tkach, ``Scaling optical communications for the next decade and beyond,''
  \emph{Bell Labs Technical Journal}, vol.~14, no.~4, pp. 3--9, 2010.

\bibitem{Winzer2009}
P.~Winzer, ``Modulation and multiplexing in optical communications,'' in
  \emph{Lasers and Electro-Optics, 2009 and 2009 Conference on Quantum
  electronics and Laser Science Conference. CLEO/QELS 2009. Conference on},
  June 2009, pp. 1 --2.

\bibitem{Winzer2011outageprob_smots}
P.~J. Winzer and G.~J. Foschini, ``{MIMO} capacities and outage probabilities
  in spatially multiplexed optical transport systems,'' \emph{Opt. Express},
  vol.~19, pp. 16\,680--16\,696, 2011.

\bibitem{Benner2005mmf_servers}
A.~F. Benner, M.~Ignatowski, J.~A. Kash, D.~M. Kuchta, and M.~B. Ritter,
  ``Exploitation of optical interconnects in future server architectures,''
  \emph{IBM Journal of Research and Development}, vol.~49, no. 4.5, pp. 755
  --775, July 2005.

\bibitem{Koike2008}
Y.~Koike and S.~Takahashi, ``Plastic optical fibers: Technologies and
  communication links,'' in \emph{Optical Fiber Telecommunications V A}, I.~P.
  Kaminow, T.~Li, and A.~E. Willner, Eds.\hskip 1em plus 0.5em minus
  0.4em\relax Burlington: Academic Press, 2008, pp. 593 -- 603.

\bibitem{Agarwal2002fiber_optics}
G.~P. Agarwal, \emph{Fiber-Optic Communication Systems}, 3rd~ed.\hskip 1em plus
  0.5em minus 0.4em\relax Wiley, 2002.

\bibitem{Shannon1949mathematical}
C.~Shannon and W.~Weaver, \emph{The Mathematical Theory of
  Communication}.\hskip 1em plus 0.5em minus 0.4em\relax University of Illinois
  Press, Urbana, 1949.

\bibitem{telatar1999_capacity}
E.~Telatar, ``Capacity of multi-antenna {G}aussian channels,'' \emph{European
  Transactions on Telecommunications}, vol.~10, no.~6, pp. 585--595, 1999.

\bibitem{Foschini1998wireless_limits}
\BIBentryALTinterwordspacing
G.~Foschini and M.~Gans, ``On limits of wireless communications in a fading
  environment when using multiple antennas,'' \emph{Wireless Personal
  Communications}, vol.~6, pp. 311--335, 1998, 10.1023/A:1008889222784.
  [Online]. Available: \url{http://dx.doi.org/10.1023/A:1008889222784}
\BIBentrySTDinterwordspacing

\bibitem{tse_viswanath_wireless_2005}
D.~Tse and P.~Viswanath, \emph{Fundamentals of Wireless Communication}.\hskip
  1em plus 0.5em minus 0.4em\relax Cambridge University Press, 2005.

\bibitem{Stuart2000mimo_optical}
\BIBentryALTinterwordspacing
H.~R. Stuart, ``Dispersive multiplexing in multimode optical fiber,''
  \emph{Science}, vol. 289, no. 5477, pp. 281--283, 2000. [Online]. Available:
  \url{http://www.sciencemag.org/content/289/5477/281.abstract}
\BIBentrySTDinterwordspacing

\bibitem{Shah2005_comimo}
A.~Shah, R.~Hsu, A.~Tarighat, A.~Sayed, and B.~Jalali, ``Coherent optical
  {MIMO} ({COMIMO}),'' \emph{Lightwave Technology, Journal of}, vol.~23, no.~8,
  pp. 2410 -- 2419, Aug. 2005.

\bibitem{Keang-Po2011mdl_capacity}
K.-P. Ho and J.~M. Kahn, ``Mode-dependent loss and gain: statistics and effect
  on mode-division multiplexing,'' \emph{Opt. Express}, vol.~19, pp.
  16\,612--16\,635, 2011.

\bibitem{Keang-Po2011frequency_diversity}
K.-P. Ho and J.~Kahn, ``Frequency diversity in mode-division multiplexing
  systems,'' \emph{Lightwave Technology, Journal of}, vol.~29, no.~24, pp. 3719
  --3726, Dec. 15, 2011.

\bibitem{Mezzadri2007_randomMatrix}
F.~Mezzadri, ``How to generate random matrices from the classical compact
  groups,'' \emph{arXiv preprint math-ph/0609050}, 2006.

\bibitem{Petz2004_asymptotics}
D.~Petz and J.~R{\'e}ffy, ``On asymptotics of large haar distributed unitary
  matrices,'' \emph{Periodica Mathematica Hungarica}, vol.~49, no.~1, pp.
  103--117, 2004.

\bibitem{papen2012_lightwave}
G.~C. Papen and R.~E. Blahut, ``{L}ightwave {C}ommunication {S}ystems,''
  unpublished draft, 2012.

\bibitem{Khosravani2001pmd_coupling}
R.~Khosravani, I.~T. Lima~Jr., P.~Ebrahimi, E.~Ibragimov, A.~Willner, and
  C.~Menyuk, ``Time and frequency domain characteristics of polarization-mode
  dispersion emulators,'' \emph{Photonics Technology Letters, IEEE}, vol.~13,
  no.~2, pp. 127 --129, Feb. 2001.

\bibitem{Keang-Po2011gd_statistics}
K.-P. Ho and J.~Kahn, ``Statistics of group delays in multimode fiber with
  strong mode coupling,'' \emph{Lightwave Technology, Journal of}, vol.~29,
  no.~21, pp. 3119 --3128, Nov. 1, 2011.

\bibitem{Cover2006_infotheory}
T.~M. Cover and J.~A. Thomas, \emph{Elements of Information Theory}.\hskip 1em
  plus 0.5em minus 0.4em\relax Wiley-Interscience, 2006.

\bibitem{Alon2004_spatialmodulators}
E.~Alon, V.~Stojanovic, J.~Kahn, S.~Boyd, and M.~Horowitz, ``Equalization of
  modal dispersion in multimode fiber using spatial light modulators,'' in
  \emph{Global Telecommunications Conference, 2004. GLOBECOM '04. IEEE}, Nov.-3
  Dec. 2004, pp. 1023 -- 1029.

\bibitem{Chen2011_modeselective}
H.~Chen, H.~van~den Boom, and A.~Koonen, ``30gbit/s 3$\times$3 optical mode
  group division multiplexing system with mode-selective spatial filtering,''
  in \emph{Optical Fiber Communication Conference and Exposition (OFC/NFOEC),
  2011 and the National Fiber Optic Engineers Conference}, March 2011, pp. 1
  --3.

\bibitem{Shieh08_OpticalOFDM}
\BIBentryALTinterwordspacing
W.~Shieh, H.~Bao, and Y.~Tang, ``Coherent optical {OFDM}: theory and design,''
  \emph{Opt. Express}, vol.~16, no.~2, pp. 841--859, Jan 2008. [Online].
  Available: \url{http://www.opticsexpress.org/abstract.cfm?URI=oe-16-2-841}
\BIBentrySTDinterwordspacing

\bibitem{shilov1971linear}
G.~Shilov, \emph{Linear Algebra}.\hskip 1em plus 0.5em minus 0.4em\relax
  Prentice-Hall Inc, 1971.

\end{thebibliography}

\end{document}